\newcommand{\mytodo}[2][]{}
\newcommand{\redtodo}[2][]{}
   \newcommand\SkipToFmtEnd{}%
   \newcommand\EndFmtInput{}%
   \long\def\SkipToFmtEnd#1\EndFmtInput{}%
\newcommand\ReadOnlyOnce[1]{\@ifundefined{#1}{\@namedef{#1}{}}\SkipToFmtEnd}
\DeclareFontFamily{OT1}{cmtex}{}
\DeclareFontShape{OT1}{cmtex}{m}{n}
  {<5><6><7><8>cmtex8
   <9>cmtex9
   <10><10.95><12><14.4><17.28><20.74><24.88>cmtex10}{}
\DeclareFontShape{OT1}{cmtex}{m}{it}
  {<-> ssub * cmtt/m/it}{}
\DeclareFontShape{OT1}{cmtt}{bx}{n}
  {<5><6><7><8>cmtt8
   <9>cmbtt9
   <10><10.95><12><14.4><17.28><20.74><24.88>cmbtt10}{}
\DeclareFontShape{OT1}{cmtex}{bx}{n}
  {<-> ssub * cmtt/bx/n}{}
\newcommand{\Conid}[1]{\mathit{#1}}
\newcommand{\Varid}[1]{\mathit{#1}}
\newcommand{\anonymous}{\kern0.06em \vbox{\hrule\@width.5em}}
\newcommand{\plus}{\mathbin{+\!\!\!+}}
\renewcommand{\leq}{\leqslant}
\newdimen\mathindent\mathindent\leftmargini}%
\def\resethooks{%
  \global\let\SaveRestoreHook\empty
  \global\let\ColumnHook\empty}
\newcommand*{\savecolumns}[1][default]%
  {\g@addto@macro\SaveRestoreHook{\savecolumns[#1]}}
\newcommand*{\restorecolumns}[1][default]%
  {\g@addto@macro\SaveRestoreHook{\restorecolumns[#1]}}
\newcommand*{\aligncolumn}[2]%
  {\g@addto@macro\ColumnHook{\column{#1}{#2}}}
\newcommand{\onelinecommentchars}{\quad-{}- }
\newcommand{\commentbeginchars}{\enskip\{-}
\newcommand{\commentendchars}{-\}\enskip}
\newcommand{\visiblecomments}{%
  \let\onelinecomment=\onelinecommentchars
  \let\commentbegin=\commentbeginchars
  \let\commentend=\commentendchars}
\newcommand{\invisiblecomments}{%
  \let\onelinecomment=\empty
  \let\commentbegin=\empty
  \let\commentend=\empty}
\newlength{\blanklineskip}
\newcommand{\hsindent}[1]{\quad}
\let\hspre\empty
\let\hspost\empty
\newcommand{\hsnewpar}[1]%
  {{\parskip=0pt\parindent=0pt\par\vskip #1\noindent}}
\newcommand{\hscodestyle}{}
\newcommand{\sethscode}[1]%
  {\expandafter\let\expandafter\hscode\csname #1\endcsname
   \expandafter\let\expandafter\endhscode\csname end#1\endcsname}
   \let\hspre\(\let\hspost\)%
   \let\hspre\(\let\hspost\)%
\newcommand{\plainhs}{\sethscode{plainhscode}}
\def\codeframewidth{\arrayrulewidth}
   \let\endoflinesave=\\
   \framedhslinecorrect\endoflinesave{.5ex}\hline
\newcommand{\framedhslinecorrect}[2]%
  {#1[#2]}
\def\column##1##2{}%
   \newcommand\>[1][]{}\newcommand\<[1][]{}\newcommand\\[1][]{}%
   \def\fromto##1##2##3{##3}%
\let\orighscode=\hscode
   \let\origendhscode=\endhscode
   \def\endhscode{\def\hscode{\endgroup\def\@currenvir{hscode}\\}\begingroup}
\def\hscode{\endgroup\def\@currenvir{hscode}}}%
   \global\let\hscode=\orighscode
   \global\let\endhscode=\origendhscode}%
\newcommand{\der}{\,\vdash}
\newcommand{\of}{{:}}
\newcommand{\dom}{\mathop{\mathsf{dom}}\nolimits}
\newcommand{\Set}{\mathsf{Set}}
\newcommand{\C}{\mathcal{C}}
\newcommand{\jinf}{\Rightarrow}
\newcommand{\jchk}{\Leftarrow}
\newenvironment{caselist}{\begin{list}{{\it Case}}{\setlength{\leftmargin}{2ex}\setlength{\itemindent}{0ex}}}{\end{list}}
\newcommand{\nextcase}{\item}
\newcommand{\mgoal}[1][]{\mbox{goal\ifthenelse{\equal{#1}{}}{}{~#1}}}
\def\lv{\mathopen{{[\kern-0.14em[}}}    
\def\rv{\mathclose{{]\kern-0.14em]}}}   
\newcommand{\den}[1]{\lv #1 \rv}
\title{Type checking through unification}
\author[1]{Francesco Mazzoli}
\affil[1]{FP Complete\footnote{The work presented in this paper was performed at the Chalmers University of Technology.} \\ \href{mailto:f@mazzo.li}{\texttt{<f@mazzo.li>}}}
\author[2]{Andreas Abel}
\affil[2]{Gothenburg University \\ \href{mailto:andreas.abel@gu.se}{\texttt{<andreas.abel@gu.se>}}}
\authorrunning{F. Mazzoli and A. Abel}
\keywords{Dependent types, type checking, higher order unification, type reconstruction}
\begin{document}

{\Large Draft of 2016-09-30}

\vspace{5em}

{\let\newpage\relax\maketitle}

\begin{abstract}
  In this paper we describe how to leverage higher-order unification to
  type check a dependently typed language with meta-variables. The
  literature usually presents the unification algorithm as a standalone
  component, however the need to check definitional equality of terms
  while type checking gives rise to a tight interplay between type
  checking and unification.  This interplay is a major source of
  complexity in the type-checking algorithm for existing dependently
  typed programming languages.  We propose an algorithm that encodes a
  type-checking problem entirely in the form of unification constraints,
  reducing the complexity of the type-checking code by taking advantage
  of higher order unification, which is already part of the
  implementation of many dependently typed languages.
\end{abstract}

\section{Introduction}
\label{introduction}

Theories with dependent types have been successfully exploited to design
programming languages and theorem provers, such as Agda
\cite{norell2007}, Idris \cite{brady2013}, or Coq \cite{coqart}.  To
make these systems practical, the user is presented with a language much
richer than the underlying type theory, which will hopefully be small
enough to gain confidence in the correctness of the code that type
checks it.

One common way to make a type theory palatable is extending it with
\emph{meta-variables}, standing for yet to be determined terms, and
solved by unification.  Their usage in traditional programming languages
is confined to type inference, and thus traditionally they can stand for
types only.  In dependently typed languages types can contain terms,
and thus meta-variables are usually extended to stand for any term in
our language.  A typical use case for meta-variables is \emph{implicit
  arguments} as introduced by Pollack \cite{pollack90}, relieving the
user of having to write easily inferrable arguments to functions.
For example, in
Agda we can write a safe \ensuremath{\Varid{head}} function which extracts the first
element of a list, inferring both the type of the elements and the
length of the list:
\begin{hscode}\SaveRestoreHook
\column{B}{@{}>{\hspre}l<{\hspost}@{}}%
\column{3}{@{}>{\hspre}l<{\hspost}@{}}%
\column{E}{@{}>{\hspre}l<{\hspost}@{}}%
\>[3]{}\Varid{head}\mathbin{:}\{\mskip1.5mu \Conid{A}\mathbin{:}\mathsf{Set} \mskip1.5mu\}\to \{\mskip1.5mu \Varid{n}\mathbin{:}\mathsf{Nat} \mskip1.5mu\}\to \mathsf{Vec} \;\Conid{A}\;(\mathrm{1}\mathbin{+}\Varid{n})\to \Conid{A}{}\<[E]%
\\
\>[3]{}\Varid{head}\;(\Varid{x}\mathbin{::}\Varid{xs})\mathrel{=}\Varid{x}{}\<[E]%
\ColumnHook
\end{hscode}\resethooks
Here, \ensuremath{\mathsf{Vec} \;\Conid{A}\;\Varid{n}} denotes a list of length \ensuremath{\Varid{n}} with elements of type \ensuremath{\Conid{A}},
and \ensuremath{\mathsf{Set} } is the type of types.  The expression \ensuremath{\{\mskip1.5mu \Conid{A}\mathbin{:}\mathsf{Set} \mskip1.5mu\}\to \{\mskip1.5mu \Varid{n}\mathbin{:}\mathsf{Nat} \mskip1.5mu\}\to \mathbin{...}}
binds two implicit arguments.  When invoking \ensuremath{\Varid{head}}, the type checker
will insert two meta-variables standing for \ensuremath{\Conid{A}} and \ensuremath{\Varid{n}} and
attempt to solve them by inspecting the \ensuremath{\mathsf{Vec} } argument that follows.
Note that \ensuremath{\Varid{n}} is a value, while in languages such ML and Haskell only
types can be implicit.

The task of integrating meta-variables in a type-checking algorithm for
dependent types gives rise to complications.  For example, consider the
task of type checking
\begin{hscode}\SaveRestoreHook
\column{B}{@{}>{\hspre}l<{\hspost}@{}}%
\column{3}{@{}>{\hspre}l<{\hspost}@{}}%
\column{E}{@{}>{\hspre}l<{\hspost}@{}}%
\>[3]{}\mathsf{true} \mathbin{:}\mathbf{if}\;\alpha \leq \mathrm{2}\;\mathbf{then}\;\mathsf{Bool} \;\mathbf{else}\;\mathsf{Nat} ,{}\<[E]%
\ColumnHook
\end{hscode}\resethooks
where \ensuremath{\alpha } is a yet to be determined (\emph{uninstantiated})
meta-variable of type \ensuremath{\mathsf{Nat} }.  We want the type of \ensuremath{\mathsf{true} } to be \ensuremath{\mathsf{Bool} },
but reduction is impeded by \ensuremath{\alpha }.  Thus, we cannot complete type
checking until \ensuremath{\alpha } is instantiated.\footnote{Note that we cannot
  instantiate \ensuremath{\alpha } without loss of generality, since both \ensuremath{\mathrm{0}} and \ensuremath{\mathrm{1}}
  are acceptable solutions.}  The problem lies in the fact that type
checking dependent types involves reducing terms to their normal forms,
something that can be affected by meta-variables, like in this
case.

To solve issues like the one above, the only viable option---apart from
refusing to solve them---is to wait for the meta-variables that are
affecting type checking to be instantiated, and then resume.  This gives
rise to a sort of concurrency that makes reasoning about the type
checking algorithm arduous.  In this paper, expanding on ideas developed
in Agda \cite{norell2007} and Epigram \cite{mcbride2004}, we propose an
algorithm that encodes a type-checking problem in a set of unification
constraints with a single traversal of the term to be checked.  The
generated constraints can be solved effectively by the unification
procedure already employed by Agda, but our elaboration procedure is
considerably simpler and shorter than Agda's type-checking code.  This
highlights an overlap in functionality between the type checker, which
needs to check that types and terms are of a certain shape; and the
unifier, which checks the equality of terms.  Moreover, our algorithm
lets us clearly separate concerns between type checking and unification,
making it easier to gain confidence on the elaboration procedure and
then experiment with various unification ``backends''.

In the rest of the paper, we will explain the problem more clearly
(Section~\ref{problem}).  Then we will introduce a simple type theory
(Section~\ref{type-theory}) that will serve as a vector to explain our
algorithm in detail.  In Section~\ref{unification} we will give a
specification to the unification procedure.  The algorithm itself is
presented in Section~\ref{algorithm}, along with some of its
properties. We will then briefly discuss the performance and how the
algorithm can be extended to support certain popular language features
(Section~\ref{remarks}).


We have implemented the presented algorithm in a prototype,
\texttt{tog}, which covers a subset of Agda---every \texttt{tog} program
is also a valid Agda program.\footnote{The source code for \texttt{tog}
  is available at \url{https://github.com/bitonic/tog}.}

\section{The problem}
\label{problem}

In this section we will explain the challenges faced
when type checking dependent types with meta-variables.  An Agda-like
syntax will be used throughout the examples, please refer to
Appendix~\ref{examples-syntax} for clarifications.

Coming back to the problem of type checking
\begin{hscode}\SaveRestoreHook
\column{B}{@{}>{\hspre}l<{\hspost}@{}}%
\column{3}{@{}>{\hspre}l<{\hspost}@{}}%
\column{E}{@{}>{\hspre}l<{\hspost}@{}}%
\>[3]{}\mathsf{true} \mathbin{:}\Conid{BoolOrNat}\;\alpha ,{}\<[E]%
\ColumnHook
\end{hscode}\resethooks
given unistantiated meta-variable \ensuremath{\alpha } and definition
\begin{hscode}\SaveRestoreHook
\column{B}{@{}>{\hspre}l<{\hspost}@{}}%
\column{3}{@{}>{\hspre}l<{\hspost}@{}}%
\column{E}{@{}>{\hspre}l<{\hspost}@{}}%
\>[3]{}\Conid{BoolOrNat}\mathbin{:}\mathsf{Nat} \to \mathsf{Set} {}\<[E]%
\\
\>[3]{}\Conid{BoolOrNat}\mathrel{=}\lambda \Varid{x}\to \mathbf{if}\;\Varid{x}\leq \mathrm{2}\;\mathbf{then}\;\mathsf{Bool} \;\mathbf{else}\;\mathsf{Nat} {}\<[E]%
\ColumnHook
\end{hscode}\resethooks
there are various tempting ways to approach the problem.  The most
conservative approach is to stop type checking when faced with
\emph{blocked} terms (terms whose normalization is impeded by some
meta-variables).  However, this approach is unsatisfactory in many
instances.  Consider
\begin{hscode}\SaveRestoreHook
\column{B}{@{}>{\hspre}l<{\hspost}@{}}%
\column{3}{@{}>{\hspre}l<{\hspost}@{}}%
\column{E}{@{}>{\hspre}l<{\hspost}@{}}%
\>[3]{}(\mathsf{true} ,\mathsf{refl} )\mathbin{:}\Conid{BoolOrNat}\;\alpha \times (\alpha \equiv \mathrm{0}){}\<[E]%
\ColumnHook
\end{hscode}\resethooks
Where \ensuremath{\Varid{x}\equiv \Varid{y}} is the type inhabited by proofs that \ensuremath{\Varid{x}} is equal to \ensuremath{\Varid{y}}
(propositional equality), and \ensuremath{\mathsf{refl} } is of type \ensuremath{\Varid{t}\equiv \Varid{t}} for any \ensuremath{\Varid{t}}
(reflexivity).  Type checking this pair will involve type checking \ensuremath{\mathsf{true} \mathbin{:}\Conid{BoolOrNat}\;\alpha } and then \ensuremath{\mathsf{refl} \mathbin{:}\alpha \equiv \mathrm{0}}. If we give up on the
first type-checking problem, we will not examine the second, which will
give us a solution for \ensuremath{\alpha } (\ensuremath{\alpha \mathbin{:=}\mathrm{0}}).  After instantiating
\ensuremath{\alpha } we can easily go back and successfully type check the first
part.  In general, we want to attempt to type check as much as possible,
and to instantiate as many meta-variables as possible---as long as we do
so without loss of generality, like in this case.

Another approach is to assume that blocked type-checking problems will
eventually be solved, and continue type checking.  However, this road is
dangerous since we need to be careful not to generate ill-typed terms or
invalid type-checking contexts, as noted by
Norell and Coquand \cite{norell2007b}. Consider
\begin{hscode}\SaveRestoreHook
\column{B}{@{}>{\hspre}l<{\hspost}@{}}%
\column{3}{@{}>{\hspre}l<{\hspost}@{}}%
\column{E}{@{}>{\hspre}l<{\hspost}@{}}%
\>[3]{}\Varid{test}\mathbin{:}(\alpha \equiv \mathrm{0})\times (((\Varid{x}\mathbin{:}\Conid{BoolOrNat}\;\alpha )\to \Conid{BoolOrNat}\;(\mathrm{1}\mathbin{+}\Varid{x}))\to \mathsf{Nat} ){}\<[E]%
\\
\>[3]{}\Varid{test}\mathrel{=}(\mathsf{refl} ,\lambda \Varid{g}\to \Varid{g}\;\mathsf{true} ){}\<[E]%
\ColumnHook
\end{hscode}\resethooks
Type checking the definition \ensuremath{\Varid{test}} will involve checking that its type
is a valid type, and that its body is well typed.  Checking the former
will involve making sure that
\begin{hscode}\SaveRestoreHook
\column{B}{@{}>{\hspre}l<{\hspost}@{}}%
\column{3}{@{}>{\hspre}l<{\hspost}@{}}%
\column{E}{@{}>{\hspre}l<{\hspost}@{}}%
\>[3]{}\Conid{BoolOrNat}\;\alpha \mathrel{=}\mathsf{Nat} {}\<[E]%
\ColumnHook
\end{hscode}\resethooks
since we know that the type of \ensuremath{\Varid{x}} must be \ensuremath{\mathsf{Nat} }, given that \ensuremath{\Varid{x}} is
used as an argument of \ensuremath{(\mathrm{1}\mathbin{+})\mathbin{:}\mathsf{Nat} \to \mathsf{Nat} }.\footnote{Note that checking
  that an equality type is a well-formed type does not involved checking
  that the equated things are equal---\ensuremath{\mathrm{4}\equiv \mathrm{5}} is a valid type.  In this
  instance while \ensuremath{\alpha \equiv \mathrm{0}} appears in the type for \ensuremath{\Varid{test}}, this does
  not mean that \ensuremath{\alpha } will be unified with \ensuremath{\mathrm{0}} when type checking the
  type.  However, type checking its proof \ensuremath{\mathsf{refl} \mathbin{:}\alpha \equiv \mathrm{0}} will.}

If we assume that the type is valid, we will proceed and type check
the body pairwise.  Type checking the first element---a proof by
reflexivity that \ensuremath{\alpha } is equal to \ensuremath{\mathrm{0}}---will instantiate \ensuremath{\alpha }
to \ensuremath{\mathrm{0}}, and then we will be faced with
\begin{hscode}\SaveRestoreHook
\column{B}{@{}>{\hspre}l<{\hspost}@{}}%
\column{3}{@{}>{\hspre}l<{\hspost}@{}}%
\column{E}{@{}>{\hspre}l<{\hspost}@{}}%
\>[3]{}(\lambda \Varid{g}\to \Varid{g}\;\mathsf{true} )\mathbin{:}((\Varid{x}\mathbin{:}\mathsf{Bool} )\to \Conid{BoolOrNat}\;(\mathrm{1}\mathbin{+}\Varid{x}))\to \mathsf{Nat} {}\<[E]%
\ColumnHook
\end{hscode}\resethooks
Note that the type is ill-typed,\footnote{\ensuremath{\Varid{x}}, of type \ensuremath{\mathsf{Bool} }, appears
  as an argument to the function \ensuremath{(\mathrm{1}\mathbin{+})}.} violating the usual
invariants present when type checking---namely the fact that when we
make progress we always generate well-typed terms.  Worse, to type check
we will instantiate \ensuremath{\Varid{x}} with \ensuremath{\mathrm{0}}, ending up with \ensuremath{\Conid{BoolOrNat}\;(\mathrm{1}\mathbin{+}\mathsf{true} )}.
With some effort we can exploit this problem to make the type checker
loop\mytodo{How? Perhaps such an example would be too long for inclusion
  in the paper, but the current text doesn't convince me that you (or
  someone else) have actually constructed such an example.}, and thus
type checking will be undecidable.

As mentioned in the introduction, at the heart of the problem lies the
fact that to type check we need to reduce terms to their weak head
normal form. If reduction is impeded by meta-variables, we cannot
proceed.  To overcome this problem, Norell proposed to define type
checking as an \emph{elaboration} procedure:\footnote{Norell was
  following a long tradition of elaborating user syntax when working
  with type theories, see Section~\ref{related-work} for more details.}
given the problem of type checking \ensuremath{\Varid{t}} against \ensuremath{\Conid{A}} in context \ensuremath{\Gamma },
type checking will produce a term \ensuremath{\Varid{u}} that approximates \ensuremath{\Varid{t}}:
\begin{hscode}\SaveRestoreHook
\column{B}{@{}>{\hspre}l<{\hspost}@{}}%
\column{3}{@{}>{\hspre}l<{\hspost}@{}}%
\column{E}{@{}>{\hspre}l<{\hspost}@{}}%
\>[3]{}\llbracket \Gamma \vdash \Varid{t}\mathbin{:}\Conid{A}\rrbracket \leadsto \Varid{u}{}\<[E]%
\ColumnHook
\end{hscode}\resethooks
\ensuremath{\Varid{u}} is an approximation of \ensuremath{\Varid{t}} in the sense that it it can be turned
into \ensuremath{\Varid{t}} by instantiating certain meta-variables---if a subterm of \ensuremath{\Varid{t}}
cannot be type checked a placeholder meta-variable will be put in its
place, an type checking that subterm will be postponed.  Type checking
will also consist in making sure that, once the postponed type-checking
problems can be solved, the placeholder meta-variables will be
instantiated accordingly with the corresponding omitted subterm of \ensuremath{\Varid{t}}
(possibly instantiated further).

For instance, when type checking the type of \ensuremath{\Varid{test}}, we'll have
\begin{hscode}\SaveRestoreHook
\column{B}{@{}>{\hspre}l<{\hspost}@{}}%
\column{3}{@{}>{\hspre}l<{\hspost}@{}}%
\column{5}{@{}>{\hspre}l<{\hspost}@{}}%
\column{14}{@{}>{\hspre}l<{\hspost}@{}}%
\column{50}{@{}>{\hspre}l<{\hspost}@{}}%
\column{67}{@{}>{\hspre}l<{\hspost}@{}}%
\column{73}{@{}>{\hspre}l<{\hspost}@{}}%
\column{E}{@{}>{\hspre}l<{\hspost}@{}}%
\>[3]{}\llbracket \cdot \vdash {}\<[14]%
\>[14]{}((\Varid{x}\mathbin{:}\Conid{BoolOrNat}\;\alpha )\to \Conid{BoolOrNat}\;(\mathrm{1}\mathbin{+}\Varid{x}))\to \mathsf{Nat} {}\<[67]%
\>[67]{}\;\mathbin{:}{}\<[73]%
\>[73]{}\;\mathsf{Set} \rrbracket \leadsto {}\<[E]%
\\
\>[3]{}\hsindent{2}{}\<[5]%
\>[5]{}((\Varid{x}\mathbin{:}\Conid{BoolOrNat}\;\alpha )\to \Conid{BoolOrNat}\;\beta ){}\<[50]%
\>[50]{}\to \mathsf{Nat} {}\<[E]%
\ColumnHook
\end{hscode}\resethooks
Since we cannot type check
\begin{hscode}\SaveRestoreHook
\column{B}{@{}>{\hspre}l<{\hspost}@{}}%
\column{3}{@{}>{\hspre}l<{\hspost}@{}}%
\column{E}{@{}>{\hspre}l<{\hspost}@{}}%
\>[3]{}\Varid{x}\mathbin{:}\Conid{BoolOrNat}\;\alpha \vdash \mathrm{1}\mathbin{+}\Varid{x}\mathbin{:}\mathsf{Nat} {}\<[E]%
\ColumnHook
\end{hscode}\resethooks
a fresh meta-variable \ensuremath{\beta } of type \ensuremath{\mathsf{Nat} } in context \ensuremath{\Varid{x}\mathbin{:}\Conid{BoolOrNat}\;\alpha } replaces \ensuremath{\mathrm{1}\mathbin{+}\Varid{x}}.  Then, when checking the body
of \ensuremath{\Varid{test}}, we will check it against the approximated type generated
above.  When \ensuremath{\alpha } is instantiated, we can resume checking that
\ensuremath{\Conid{BoolOrNat}\;\alpha \mathrel{=}\mathsf{Nat} }, and if we are successful, instantiate \ensuremath{\beta \mathbin{:=}\mathrm{1}\mathbin{+}\Varid{x}}.  This will prevent us from running into problems when type
checking the body, since when we do instantiate \ensuremath{\alpha } to \ensuremath{\mathrm{0}}, we do
not have \ensuremath{\mathrm{1}\mathbin{+}\Varid{x}} later: instead, \ensuremath{\beta } is in its place, preserving the
well-typedness of the type.

The Agda system, as described in Norell's thesis,
currently implements this elaboration interleaving type
checking and unification, using some fairly delicate machinery.  Our
contribution is to describe a type-checking problem entirely in terms of
unification constraints, thus simplifying the algorithm.  This
highlights an overlap in functionality between the type checker, which
needs to check that types and terms are of a certain shape, and the
unifier, which checks the equality of terms: we are using the unifier as
an engine to pattern match on types, with taking meta-variables into
account. Moreover, separating the unifier from the type checker makes it
easy easy to experiment with different unification ``backends'' used by
the same type checking ``frontend''. \mytodo{It would be nice to
  highlight that this makes it easier to maintain the invariants we want
  in the type checker.}

\section{The type theory}
\label{type-theory}

To illustrate the type-checking algorithm we will make use of a dependent
type theory with booleans and one universe.
Its syntax is shown in Figure~\ref{syntax}.

\subsection{Terms and types}

Terms and types inhabit the same syntactic class.  We usually denote
terms with \ensuremath{\Varid{t}}, \ensuremath{\Varid{u}}, and \ensuremath{\Varid{v}}; and types with \ensuremath{\Conid{A}}, \ensuremath{\Conid{B}}, and \ensuremath{\Conid{C}}.  The
theory is designed to be the simplest fragment that presents the
problems described in Section~\ref{problem}.  For this reason we include
a universe \ensuremath{\mathsf{Set} } and means of computing with booleans, so that we can
write functions from booleans to types---otherwise meta-variables can
never prevent us from knowing how a type looks like.  The typing rules
and algorithms presented in this paper can be extended to a richer theory,
as we have done for our implementation, which includes implicit
arguments, user defined inductive data types and records, and
propositional equality.

\begin{figure}
  \begin{minipage}[b]{.58\textwidth}
    \begin{hscode}\SaveRestoreHook
\column{B}{@{}>{\hspre}l<{\hspost}@{}}%
\column{7}{@{}>{\hspre}l<{\hspost}@{}}%
\column{21}{@{}>{\hspre}l<{\hspost}@{}}%
\column{26}{@{}>{\hspre}l<{\hspost}@{}}%
\column{71}{@{}>{\hspre}l<{\hspost}@{}}%
\column{E}{@{}>{\hspre}l<{\hspost}@{}}%
\>[7]{}\Varid{x},\Varid{y},\Varid{z}{}\<[71]%
\>[71]{}\mbox{\onelinecomment  Variables}{}\<[E]%
\\
\>[7]{}\alpha ,\beta ,\gamma {}\<[71]%
\>[71]{}\mbox{\onelinecomment  Meta-variables}{}\<[E]%
\\[\blanklineskip]%
\>[7]{}\mbox{\onelinecomment  Types/terms}{}\<[E]%
\\
\>[7]{}\Conid{A},\Conid{B},\Conid{C},\Varid{t},\Varid{u},\Varid{v}{}\<[E]%
\\
\>[7]{}\hsindent{14}{}\<[21]%
\>[21]{}\mathbin{::=}{}\<[26]%
\>[26]{}\mathsf{Set} {}\<[71]%
\>[71]{}\mbox{\onelinecomment  Type of types}{}\<[E]%
\\
\>[7]{}\hsindent{14}{}\<[21]%
\>[21]{}\mid {}\<[26]%
\>[26]{}\mathsf{Bool} \mid \mathsf{true} \mid \mathsf{false} {}\<[71]%
\>[71]{}\mbox{\onelinecomment  Booleans}{}\<[E]%
\\
\>[7]{}\hsindent{14}{}\<[21]%
\>[21]{}\mid {}\<[26]%
\>[26]{}(\Varid{x}\mathbin{:}\Conid{A})\to \Conid{B}\mid \lambda \Varid{x}\to \Varid{t}{}\<[71]%
\>[71]{}\mbox{\onelinecomment  Dependent functions}{}\<[E]%
\\
\>[7]{}\hsindent{14}{}\<[21]%
\>[21]{}\mid {}\<[26]%
\>[26]{}\Varid{n}{}\<[71]%
\>[71]{}\mbox{\onelinecomment  Neutral term}{}\<[E]%
\\[\blanklineskip]%
\>[7]{}\mbox{\onelinecomment  Neutral terms}{}\<[E]%
\\
\>[7]{}\Varid{n}{}\<[21]%
\>[21]{}\mathbin{::=}{}\<[26]%
\>[26]{}\Varid{h}{}\<[71]%
\>[71]{}\mbox{\onelinecomment  Head}{}\<[E]%
\\
\>[21]{}\mid {}\<[26]%
\>[26]{}\Varid{n}\;\Varid{t}{}\<[71]%
\>[71]{}\mbox{\onelinecomment  Function application}{}\<[E]%
\\
\>[21]{}\mid {}\<[26]%
\>[26]{}\mathbf{if}\;\Varid{n}\;/\Varid{x}.\Conid{A}\;\mathbf{then}\;\Varid{t}\;\mathbf{else}\;\Varid{u}{}\<[71]%
\>[71]{}\mbox{\onelinecomment  \ensuremath{\mathsf{Bool} } elimination}{}\<[E]%
\\[\blanklineskip]%
\>[7]{}\Varid{h}{}\<[21]%
\>[21]{}\mathbin{::=}{}\<[26]%
\>[26]{}\Varid{x}\mid \alpha {}\<[71]%
\>[71]{}\mbox{\onelinecomment  Neutral term heads}{}\<[E]%
\\[\blanklineskip]%
\>[7]{}\Gamma ,\Delta {}\<[21]%
\>[21]{}\mathbin{::=}\cdot \mid \Gamma ,\Varid{x}\mathbin{:}\Conid{A}{}\<[71]%
\>[71]{}\mbox{\onelinecomment  Contexts}{}\<[E]%
\\[\blanklineskip]%
\>[7]{}\Sigma ,\Xi {}\<[21]%
\>[21]{}\mathbin{::=}\cdot \mid \Sigma ,\alpha \mathbin{:}\Conid{A}{}\<[71]%
\>[71]{}\mbox{\onelinecomment  Signatures}{}\<[E]%
\\[\blanklineskip]%
\>[7]{}\theta ,\eta {}\<[21]%
\>[21]{}\mathbin{::=}\cdot \mid \theta ,\alpha \mathbin{:=}\Varid{t}{}\<[71]%
\>[71]{}\mbox{\onelinecomment  Meta-variable substitutions}{}\<[E]%
\ColumnHook
\end{hscode}\resethooks
    \caption{Syntax}
    \label{syntax}
  \end{minipage}
\end{figure}

\subsection{Neutral terms, substitutions, and term application}

Terms are always kept in \ensuremath{\beta }-normal form. Terms headed by a variable
or meta-variable are called \emph{neutral}, the others \emph{canonical}.
Variable and meta-variable substitution immediately restore the
\ensuremath{\beta }-normal form as soon as the substitution is performed, a technique
known as \emph{hereditary substitution} \cite{watkins2003}.
Substituting a term \ensuremath{\Varid{u}} for a head \ensuremath{\Varid{h}}, i.e., a variable \ensuremath{\Varid{x}} or
meta-variable \ensuremath{\alpha }, is written \ensuremath{\Varid{t} [\mskip1.5mu \Varid{h}\mathbin{:=}\Varid{u}\mskip1.5mu]}, reading ``substitute
\ensuremath{\Varid{u}} for \ensuremath{\Varid{h}} in \ensuremath{\Varid{t}}''. The rules for substitution are not relevant to the
paper and are shown in Appendix~\ref{substitution}.  The appendix also
defines rules to eliminate redexes that substitution might generate, to
restore the \ensuremath{\beta }-normal form.

\subsection{Contexts and signatures}

Most operations are done under a context (denoted by \ensuremath{\Gamma } or
\ensuremath{\Delta }), that stores the types of free variables; and a signature
(denoted by \ensuremath{\Sigma } or \ensuremath{\Xi }), that stores the type of meta-variables.  We
tacitly assume that no duplicate names are present in contexts and
signatures.  We often make use of a global signature \ensuremath{\Sigma } throghout the
rules, if there is no need for the rules to carry it explicitely since
it is never changed.  Note that a signature contains only closed types
for the meta-variables---we do not make use of an explicit
representation of meta-variables in context. This is for the sake of
simplicity, since we do not present our unification algorithm in detail,
where the contextual representation would be most convenient. Throughout
the paper, we will use \ensuremath{\Gamma \to \Conid{A}} to indicate the function type formed
by all the types in \ensuremath{\Gamma } as the domains and terminating with \ensuremath{\Conid{A}}, and
\ensuremath{\Varid{t}\;\Gamma } to indicate the term formed by \ensuremath{\Varid{t}} applied to all the
variables in \ensuremath{\Gamma }.  Moreover, every mention of a context and a
signature is assumed to be valid according to the rules in
Figure~\ref{context-validity} and \ref{signature-validity}.

Throughout the paper, we will also concatenate whole contexts and
signatures, e.g. \ensuremath{\Sigma ,\Xi }.

\subsection{Well-typed terms}

The bidirectional typing checking rules are shown in figures
\ref{rules-type-checking} and \ref{rules-type-inference}.  The type of
neutral terms can be inferred, while canonical terms are checked.  Our
type theory includes a universe \ensuremath{\mathsf{Set} } equipped with an inconsistent
typing rule \ensuremath{\mathsf{Set} \mathbin{:}\mathsf{Set} } for the sake of simplicity, but our presentation
can be extended with stratified universes.

Finally, definitional equality of terms (needed to define the typing
rules) is specified in figures \ref{term-conversion} and
\ref{spine-conversion}.  The conversion rules are performed in a
type-directed way, so that it can respect the \ensuremath{\eta }-laws of
functions.

\mytodo[inline]{Maybe mention some properties of the type theory?
  Normalization, decidability of type checking, etc?
  Andreas: These properties fail because of Set : Set}

\subsection{Meta-variable substitutions}

We specify typed meta-variable substitutions, a construct that will be
useful to give a specification to our unification algorithm.  A
meta-variable substitution \ensuremath{\theta } from \ensuremath{\Sigma } to \ensuremath{\Xi } gives an
instantiation to each meta-variable in \ensuremath{\Sigma }.  The meta-variables in the
instantiations are scoped over a new signature \ensuremath{\Xi }.  The validity rule
shown in figure \ref{well-formed-subst} makes sure that the
meta-variables instantiations are well-typed with respect to their type.

Applying a substitution \ensuremath{\theta } to a term \ensuremath{\Varid{t}} amounts to substitute each
meta-variable for its instantiation in \ensuremath{\theta }, as specified in appendix
\ref{substitution}.  Moreover, if the substitution is valid, if we have
\ensuremath{\Sigma ;\Gamma \vdash \Varid{t}\mathbin{:}\Conid{A}} and \ensuremath{\Xi \vdash \theta \mathbin{:}\Sigma }, we will have that
\begin{hscode}\SaveRestoreHook
\column{B}{@{}>{\hspre}l<{\hspost}@{}}%
\column{3}{@{}>{\hspre}l<{\hspost}@{}}%
\column{E}{@{}>{\hspre}l<{\hspost}@{}}%
\>[3]{}\Xi ;\theta  \Gamma \vdash \theta  \Varid{t}\mathbin{:}\theta  \Conid{A},{}\<[E]%
\ColumnHook
\end{hscode}\resethooks
where applying a substitution to a context amounts to applying it to all
the types it contains.

Like with contexts and signatures, we will use \ensuremath{,} to also concatenate
meta-variable substitutions.  So for example if we have \ensuremath{\Xi \vdash \theta \mathbin{:}\Sigma } and \ensuremath{\Xi \vdash \eta \mathbin{:}\Sigma' }, we have that \ensuremath{\Xi \vdash (\theta ,\eta )\mathbin{:}(\Sigma ,\Sigma' )}.

\begin{figure}
  \[
  \inference{}{\ensuremath{\Gamma \vdash \mathsf{Set} \mathbin{:}\mathsf{Set} }}\hfill
  \inference{}{\ensuremath{\Gamma \vdash \mathsf{Bool} \mathbin{:}\mathsf{Set} }}\hfill
  \inference{}{\ensuremath{\Gamma \vdash \mathsf{true} \mathbin{:}\mathsf{Bool} }}\hfill
  \inference{}{\ensuremath{\Gamma \vdash \mathsf{false} \mathbin{:}\mathsf{Bool} }}
  \]
  \[
  \inference{\ensuremath{\Gamma \vdash \Conid{A}\mathbin{:}\mathsf{Set} } & \ensuremath{\Gamma ,\Varid{x}\mathbin{:}\Conid{A}\vdash \Conid{B}\mathbin{:}\mathsf{Set} }}{
    \ensuremath{\Gamma \vdash (\Varid{x}\mathbin{:}\Conid{A})\to \Conid{B}\mathbin{:}\mathsf{Set} }
  }\hfill
  \inference{\ensuremath{\Gamma ,\Varid{x}\mathbin{:}\Conid{A}\vdash \Varid{t}\mathbin{:}\Conid{B}}}{\ensuremath{\Gamma \vdash \lambda \Varid{x}\to \Varid{t}\mathbin{:}(\Varid{x}\mathbin{:}\Conid{A})\to \Conid{B}}}
  \hfill
  \inference{\ensuremath{\Gamma \vdash \Varid{n}\Rightarrow \Conid{A}} & \ensuremath{\Gamma \vdash \Conid{A}\equiv \Conid{B}\mathbin{:}\mathsf{Set} }}{
    \ensuremath{\Gamma \vdash \Varid{n}\mathbin{:}\Conid{B}}
  }
  \]
  \caption{\boxed{\ensuremath{\Sigma ;\Gamma \vdash \Varid{t}\mathbin{:}\Conid{A}}} Type checking canonical terms}
  \label{rules-type-checking}
\end{figure}

\begin{figure}
    \[
    \inference{\ensuremath{\Varid{x}\mathbin{:}\Conid{A}\in \Gamma }}{
      \ensuremath{\Gamma \vdash \Varid{x}\Rightarrow \Conid{A}}
    }\hfill
    \inference{\ensuremath{\alpha \mathbin{:}\Conid{A}\in \Sigma }}{
      \ensuremath{\Gamma \vdash \alpha \Rightarrow \Conid{A}}
    }\hfill
    \inference{\ensuremath{\Gamma \vdash \Varid{n}\Rightarrow (\Varid{x}\mathbin{:}\Conid{A})\to \Conid{B}} & \ensuremath{\Gamma \vdash \Varid{t}\mathbin{:}\Conid{A}}}{
      \ensuremath{\Gamma \vdash \Varid{n}\;\Varid{t}\Rightarrow \Conid{B}[\Varid{x}\mathbin{:=}\Varid{t}]}
    }
    \]
    \[
    \inference{\ensuremath{\Gamma \vdash \Varid{n}\Rightarrow \mathsf{Bool} } & \ensuremath{\Gamma ,\Varid{x}\mathbin{:}\mathsf{Bool} \vdash \Conid{A}\mathbin{:}\mathsf{Set} } &
      \ensuremath{\Gamma \vdash \Varid{t}\mathbin{:}\Conid{A}[\Varid{x}\mathbin{:=}\mathsf{true} ]} &
      \ensuremath{\Gamma \vdash \Varid{u}\mathbin{:}\Conid{A}[\Varid{x}\mathbin{:=}\mathsf{false} ]} &
    }{
      \ensuremath{\Gamma \vdash \mathbf{if}\;\Varid{n}\;/\Varid{x}.\Conid{A}\;\mathbf{then}\;\Varid{t}\;\mathbf{else}\;\Varid{u}\Rightarrow \Conid{A}[\Varid{x}\mathbin{:=}\Varid{n}]}
    }
    \]
    \caption{\boxed{\ensuremath{\Sigma ;\Gamma \vdash \Varid{n}\Rightarrow \Conid{A}}} Type inference for neutral terms}
    \label{rules-type-inference}
\end{figure}

\begin{figure}
  \begin{minipage}[b]{0.48\textwidth}
    \[
    \inference{}{\ensuremath{\vdash \cdot }}\hfill
    \inference{
      \ensuremath{\vdash \Sigma } & \ensuremath{\Sigma ;\cdot \vdash \Conid{A}\mathbin{:}\mathsf{Set} }
    }{
      \ensuremath{\vdash \Sigma ,\alpha \mathbin{:}\Conid{A}}
    }
    \]
  \caption{\boxed{\ensuremath{\vdash \Sigma }} Well-formed signatures}
  \label{signature-validity}
  \end{minipage}
  \hfill
  \begin{minipage}[b]{.48\textwidth}
    \[
    \inference{}{\ensuremath{\Sigma \vdash \cdot }}\hfill
    \inference{
      \ensuremath{\Sigma \vdash \Gamma } & \ensuremath{\Sigma ;\Gamma \vdash \Conid{A}\mathbin{:}\mathsf{Set} }
    }{
      \ensuremath{\Sigma \vdash \Gamma ,\Varid{x}\mathbin{:}\Conid{A}}
    }
    \]
  \caption{\boxed{\ensuremath{\Sigma \vdash \Gamma }} Well-formed contexts}
  \label{context-validity}
  \end{minipage}
\end{figure}

\begin{figure}
    \[
    \inference{
      \text{for all \ensuremath{(\alpha \mathbin{:}\Conid{A})\in \Sigma }, \ensuremath{\alpha \mathbin{:=}\Varid{t}\in \theta } and \ensuremath{\;\Xi ;\cdot \vdash \Varid{t}\mathbin{:}\theta  \Conid{A}}}
    }{
      \ensuremath{\Xi \vdash \theta \mathbin{:}\Sigma }
    }
    \]
  \caption{\boxed{\ensuremath{\Xi \vdash \theta \mathbin{:}\Sigma }} Well-formed meta-variable substitutions}
  \label{well-formed-subst}
\end{figure}

\begin{figure}
  \[
  \inference{}{\ensuremath{\Gamma \vdash \mathsf{Set} \equiv \mathsf{Set} \mathbin{:}\mathsf{Set} }}\hfill
  \inference{}{\ensuremath{\Gamma \vdash \mathsf{Bool} \equiv \mathsf{Bool} \mathbin{:}\mathsf{Set} }}\hfill
  \inference{}{\ensuremath{\Gamma \vdash \mathsf{true} \equiv \mathsf{true} \mathbin{:}\mathsf{Bool} }}
  \]
  \[
  \inference{}{\ensuremath{\Gamma \vdash \mathsf{false} \equiv \mathsf{false} \mathbin{:}\mathsf{Bool} }}\hfill
  \inference{\ensuremath{\Gamma \vdash A_1 \equiv A_2 \mathbin{:}\mathsf{Set} } & \ensuremath{\Gamma ,\Varid{x}\mathbin{:}A_1 \vdash B_1 \equiv B_2 \mathbin{:}\mathsf{Set} }}{
    \ensuremath{\Gamma \vdash (\Varid{x}\mathbin{:}A_1 )\to B_1 \equiv (\Varid{x}\mathbin{:}A_2 )\to B_2 \mathbin{:}\mathsf{Set} }
  }
  \]
  \[
  \inference{\ensuremath{\Gamma ,\Varid{x}\mathbin{:}\Conid{A}\vdash \Varid{f}\;\Varid{x}\equiv \Varid{g}\;\Varid{x}\mathbin{:}\Conid{B}}}{\ensuremath{\Gamma \vdash \Varid{f}\equiv \Varid{g}\mathbin{:}(\Varid{x}\mathbin{:}\Conid{A})\to \Conid{B}}}
  \hfill
  \inference{
    \ensuremath{\Gamma \vdash \Varid{n}\equiv \Varid{n'}\Rightarrow \Conid{A}} &
    \ensuremath{\Gamma \vdash \Conid{A}\equiv \Conid{B}\mathbin{:}\mathsf{Set} }
  }{
    \ensuremath{\Gamma \vdash \Varid{n}\equiv \Varid{n'}\mathbin{:}\Conid{B}}
  }
  \]
  \caption{\boxed{\ensuremath{\Sigma ;\Gamma \vdash \Varid{t}\equiv \Varid{u}\mathbin{:}\Conid{A}}} Definitional equality of canonical terms}
  \label{term-conversion}
\end{figure}
\begin{figure}
  \[
  \inference{
    \ensuremath{\Gamma \vdash \Varid{h}\Rightarrow \Conid{A}}
  }{
    \ensuremath{\Gamma \vdash \Varid{h}\equiv \Varid{h}\Rightarrow \Conid{A}}
  }
  \hfill
  \inference{
    \ensuremath{\Gamma \vdash \Varid{n}\equiv \Varid{n'}\Rightarrow (\Varid{x}\mathbin{:}\Conid{A})\to \Conid{B}} &
    \ensuremath{\Gamma \vdash \Varid{t}\equiv \Varid{t'}\mathbin{:}\Conid{A}}
  }{
    \ensuremath{\Gamma \vdash \Varid{n}\;\Varid{t}\equiv \Varid{n'}\;\Varid{t'}\Rightarrow \Conid{B}[\Varid{x}\mathbin{:=}\Varid{n}]}
  }
  \]
  \[
  \inference{
    \ensuremath{\Gamma \vdash \Varid{n}\equiv \Varid{n'}\Rightarrow \mathsf{Bool} } \\
    \ensuremath{\Gamma ,\Varid{x}\mathbin{:}\mathsf{Bool} \vdash \Conid{A}\equiv \Conid{A'}\mathbin{:}\mathsf{Set} } &
    \ensuremath{\Gamma \vdash \Varid{t}\equiv \Varid{t'}\mathbin{:}\Conid{A}[\Varid{x}\mathbin{:=}\mathsf{true} ]} &
    \ensuremath{\Gamma \vdash \Varid{u}\equiv \Varid{u'}\mathbin{:}\Conid{A}[\Varid{x}\mathbin{:=}\mathsf{false} ]}
  }{
    \ensuremath{\Gamma \vdash \mathbf{if}\;\Varid{n}\;/\Varid{x}.\Conid{A}\;\mathbf{then}\;\Varid{t}\;\mathbf{else}\;\Varid{u}\equiv \mathbf{if}\;\Varid{n'}\;/\Varid{x}.\Conid{A'}\;\mathbf{then}\;\Varid{t'}\;\mathbf{else}\;\Varid{u'}\Rightarrow \Conid{A}[\Varid{x}\mathbin{:=}\Varid{n}]}
  }
  \]
  \caption{\boxed{\ensuremath{\Sigma ;\Gamma \vdash \Varid{n}\equiv \Varid{n'}\Rightarrow \Conid{A}}}
    Definitional equality of neutral terms}
  \label{spine-conversion}
\end{figure}

\section{Unification}
\label{unification}

In this section we will give a specification for the unification
algorithm that we will need to solve the constraints generated by
elaboration.

\subsection{Unification constraints}

The input for the unification algorithm are \emph{heterogeneous}
constraints of the form
\begin{hscode}\SaveRestoreHook
\column{B}{@{}>{\hspre}l<{\hspost}@{}}%
\column{3}{@{}>{\hspre}l<{\hspost}@{}}%
\column{E}{@{}>{\hspre}l<{\hspost}@{}}%
\>[3]{}\Gamma \vdash \Varid{t}\mathbin{:}\Conid{A}\mathrel{=}\Varid{u}\mathbin{:}\Conid{B}.{}\<[E]%
\ColumnHook
\end{hscode}\resethooks
Such a constraint is well formed if we have that \ensuremath{\Gamma \vdash \Varid{t}\mathbin{:}\Conid{A}} and
\ensuremath{\Gamma \vdash \Varid{u}\mathbin{:}\Conid{B}}.  As we will see in Section~\ref{algorithm}, it is
crucial for the constraints to be heterogeneous for the elaboration
procedure to work as we intend to.

A constraint is \emph{solved} if we have that \ensuremath{\Gamma \vdash \Conid{A}\equiv \Conid{B}\mathbin{:}\mathsf{Set} }
and \ensuremath{\Gamma \vdash \Varid{t}\equiv \Varid{u}\mathbin{:}\Conid{A}}.  This means that to solve a constraint the
unifier will have to establish definitional equality of both the types
and the terms.

\subsection{Unification algorithm specification}

A unification algorithm takes a signature and set of constraints, and
attempts to solve them by instantiating meta-variables.  Thus,
unification rules will be of the form
\begin{hscode}\SaveRestoreHook
\column{B}{@{}>{\hspre}l<{\hspost}@{}}%
\column{3}{@{}>{\hspre}l<{\hspost}@{}}%
\column{E}{@{}>{\hspre}l<{\hspost}@{}}%
\>[3]{}\Sigma ,\mathcal{C} \leadsto \Xi ,\theta {}\<[E]%
\ColumnHook
\end{hscode}\resethooks
Where \ensuremath{\Sigma } and \ensuremath{\mathcal{C} } are the input signature and constraints, and \ensuremath{\Xi }
and \ensuremath{\theta } are the output signature and substitution, such that \ensuremath{\Xi \vdash \theta \mathbin{:}\Sigma }.  If the substitution solves all the constraints we have
that
\begin{hscode}\SaveRestoreHook
\column{B}{@{}>{\hspre}l<{\hspost}@{}}%
\column{3}{@{}>{\hspre}l<{\hspost}@{}}%
\column{E}{@{}>{\hspre}l<{\hspost}@{}}%
\>[3]{}\Xi \Vdash \theta  \mathcal{C} .{}\<[E]%
\ColumnHook
\end{hscode}\resethooks
Note that unification might make no progress at all, and just return
\ensuremath{\Sigma } and the identity substitution.

\subsection{A suitable unification algorithm}

Higher order unification in the context of dependently typed languages
is still a poorly understood topic, and we do not have the space to
discuss it in depth here.  The basis for most of the unification
algorithms employed in such languages is \emph{pattern} unification, as
introduced by Miller \cite{miller1992}. However in modern languages such
as Agda or Coq there are factors complicating this process.  The main
inconvenience, as already mentioned in Section~\ref{introduction}, is
the fact that we cannot always know if a constraint is solvable, since
solving other constraints might enable us to solve the current one.  A
algorithm that takes care of handling constraints under these
assumptions is usually called \emph{dynamic}.

However, another issue when unifying terms in these type theories is
that definitional equality (specified by conversion rules such as the ones
presented in this paper) often includes \ensuremath{\eta }-laws for functions and
product types, and possibly other type-directed conversion rules.  For
this reason our constraints equate typed expressions.\footnote{We also
  have the constraints to equate the types too, but this is only for
  brevity: if that wasn't the case, all we would have to do is to
  convert each \ensuremath{\Gamma \vdash \Varid{t}\mathbin{:}\Conid{A}\mathrel{=}\Varid{u}\mathbin{:}\Conid{B}} into \ensuremath{\{\mskip1.5mu \Gamma \vdash \Conid{A}\mathbin{:}\mathsf{Set} \mathrel{=}\Conid{B}\mathbin{:}\mathsf{Set} ,\Gamma \vdash \Varid{t}\mathbin{:}\Conid{A}\mathrel{=}\Varid{u}\mathbin{:}\Conid{B}\mskip1.5mu\}}.} Note that the types in the
constraints are only needed so that we can abide by the \ensuremath{\eta } laws.

Ideally, to solve the heterogeneous constraints, a heterogeneous pattern
unification algorithm is needed, such as the one described in chapter 4 of Adam
Gundry's thesis \cite{gundry2013}.  However, in our prototype, we have
found that implementing such an algorithm is impractical for performance
reasons.\footnote{Specifically, the need to typecheck meta-variable
  instantiations and retry if they are ill-typed is particularly
  taxing.}  In practice, a homogeneous pattern unification algorithm
like the one employed in the Agda system works well enough in many of
the examples that we have analysed.  In this context, a heterogeneous
constraint \ensuremath{\Gamma \vdash \Varid{t}\mathbin{:}\Conid{A}\mathrel{=}\Varid{u}\mathbin{:}\Conid{B}} is converted to two homogeneous
constraints, \ensuremath{\Gamma \vdash \Conid{A}\mathrel{=}\Conid{B}\mathbin{:}\mathsf{Set} } and \ensuremath{\Gamma \vdash \Varid{t}\mathrel{=}\Varid{u}\mathbin{:}\Conid{A}}.  Some
bookkeeping will be needed to ensure that the constraint equating the
types is solved before attempting the one equating the terms, so that
\ensuremath{\Gamma \vdash \Varid{t}\mathrel{=}\Varid{u}\mathbin{:}\Conid{A}} is considered only when it is well-formed---when we
know that \ensuremath{\Conid{A}\equiv \Conid{B}}.

\section{Type checking through unification}
\label{algorithm}

As mentioned in Section~\ref{problem}, our algorithm will elaborate a
type-checking problem into a well-typed term and a set of unification
constraints. Given some type-checking problem \ensuremath{\Sigma ;\Gamma \vdash \Varid{t}\mathbin{:}\Conid{A}}, the
algorithm will elaborate it into a term \ensuremath{\Varid{u}} and constraints \ensuremath{\mathcal{C} }, along
with an extended signature \ensuremath{\Xi }---since the elaboration process will add
new meta-variables.

The algorithm is specified using rules of the form
\begin{hscode}\SaveRestoreHook
\column{B}{@{}>{\hspre}l<{\hspost}@{}}%
\column{3}{@{}>{\hspre}l<{\hspost}@{}}%
\column{E}{@{}>{\hspre}l<{\hspost}@{}}%
\>[3]{}\llbracket \Sigma ;\Gamma \vdash \Varid{t}\mathbin{:}\Conid{A}\rrbracket \leadsto \Xi ,\Varid{u},\mathcal{C} ,{}\<[E]%
\ColumnHook
\end{hscode}\resethooks
such that:\mytodo{I should name these properties (or at least the last
  one) here, and refer to them in \ref{some-properties}.}
\begin{itemize}
  \item \ensuremath{\Xi } is an extension of \ensuremath{\Sigma };
  \item \ensuremath{\Varid{u}} is well typed: \ensuremath{\Xi ;\Gamma \vdash \Varid{u}\mathbin{:}\Conid{A}};
  \item \ensuremath{\mathcal{C} } is a set of well formed unification constraints;
  \item if unification produces a signature \ensuremath{\Xi' } and substitution
    \ensuremath{\Xi' \vdash \theta \mathbin{:}\Xi } such that \ensuremath{\Xi' \Vdash \theta  \mathcal{C} }, we have that
    \ensuremath{\Xi' ;\theta  \Gamma \vdash \theta  \Varid{t}\equiv \theta  \Varid{u}\mathbin{:}\theta  \Conid{A}}.  In
    other words, solving all the constraints restores definitional
    equality between the original term \ensuremath{\Varid{t}} and the original term \ensuremath{\Varid{u}}.
\end{itemize}

The main idea is to infer what the type must look like based on the
term, and generate constraints that make sure that, if the constraints
can be solved, that will be the case.  For example, if faced with
problem
\begin{hscode}\SaveRestoreHook
\column{B}{@{}>{\hspre}l<{\hspost}@{}}%
\column{3}{@{}>{\hspre}l<{\hspost}@{}}%
\column{E}{@{}>{\hspre}l<{\hspost}@{}}%
\>[3]{}\Sigma ;\Gamma \vdash \mathsf{true} \mathbin{:}\Conid{A},{}\<[E]%
\ColumnHook
\end{hscode}\resethooks
we know that \ensuremath{\Conid{A}} must be \ensuremath{\mathsf{Bool} }.  However, \ensuremath{\Conid{A}} might be a type stuck on
a meta-variable, as discussed in Section~\ref{problem}.  At the same
time, we want the elaboration procedure to immediately return a
well-typed term.  The heterogeneous constraints let us do just that: we
will create a new meta-variable of type \ensuremath{\Conid{A}} in \ensuremath{\Gamma }, and use that as
the elaborated term.  Moreover we will return a constraint equating the
newly created meta-variable to \ensuremath{\mathsf{true} }:
\begin{hscode}\SaveRestoreHook
\column{B}{@{}>{\hspre}l<{\hspost}@{}}%
\column{3}{@{}>{\hspre}l<{\hspost}@{}}%
\column{5}{@{}>{\hspre}l<{\hspost}@{}}%
\column{E}{@{}>{\hspre}l<{\hspost}@{}}%
\>[3]{}\llbracket \Sigma ;\Gamma \vdash \mathsf{true} \mathbin{:}\Conid{A}\rrbracket \leadsto {}\<[E]%
\\
\>[3]{}\hsindent{2}{}\<[5]%
\>[5]{}(\Sigma ,\alpha \mathbin{:}\Gamma \to \Conid{A}),\;\alpha \;\Gamma ,\;\{\mskip1.5mu \Gamma \vdash \mathsf{true} \mathbin{:}\mathsf{Bool} \mathrel{=}\alpha \;\Gamma \mathbin{:}\Conid{A}\mskip1.5mu\}{}\<[E]%
\ColumnHook
\end{hscode}\resethooks
Note how we respect the contract of elaboration---the elaborated term is
well-typed, the constraint is well-formed---without making any
commitment on the shape of \ensuremath{\Conid{A}}.

For a more complicated example, consider the type-checking problem
\begin{hscode}\SaveRestoreHook
\column{B}{@{}>{\hspre}l<{\hspost}@{}}%
\column{3}{@{}>{\hspre}l<{\hspost}@{}}%
\column{E}{@{}>{\hspre}l<{\hspost}@{}}%
\>[3]{}\Sigma ;\Gamma \vdash \lambda \Varid{x}\to \Varid{t}\mathbin{:}\Conid{A}.{}\<[E]%
\ColumnHook
\end{hscode}\resethooks
We know that \ensuremath{\Conid{A}} needs to be a function type, but we do not know what
the domain and codomain types are yet.  To get around this problem we
will add new meta-variables to the signature acting as the domain and
codomain, then elaborate the body using those, and follow the same
technique illustrated above to return a well-typed term by adding a new
meta-variable:
\begin{hscode}\SaveRestoreHook
\column{B}{@{}>{\hspre}l<{\hspost}@{}}%
\column{3}{@{}>{\hspre}l<{\hspost}@{}}%
\column{E}{@{}>{\hspre}l<{\hspost}@{}}%
\>[3]{}\mbox{\onelinecomment  Add meta-variables for the domain (\ensuremath{\beta }) and codomain (\ensuremath{\gamma }):}{}\<[E]%
\\
\>[3]{}\Sigma_1 \mathbin{:=}\Sigma ,\beta \mathbin{:}\Gamma \to \mathsf{Set} ,\gamma \mathbin{:}\Gamma \to (\Varid{x}\mathbin{:}\beta )\to \mathsf{Set} {}\<[E]%
\\
\>[3]{}\mbox{\onelinecomment  Elaborate the body of the abstraction:}{}\<[E]%
\\
\>[3]{}\llbracket \Sigma_1 ;\Gamma ,\Varid{x}\mathbin{:}\beta \vdash \Varid{t}\mathbin{:}\gamma \;\Gamma \;\Varid{x}\rrbracket \leadsto \Sigma_2 ,\Varid{u},\mathcal{C} {}\<[E]%
\\
\>[3]{}\mbox{\onelinecomment  Add meta-variable that we will return as the elaborated term:}{}\<[E]%
\\
\>[3]{}\Sigma_3 \mathbin{:=}\Sigma_2 ,\alpha \mathbin{:}\Gamma \to \Conid{A}{}\<[E]%
\\
\>[3]{}\mbox{\onelinecomment  Return the appropriate constraint equating the abstracted elaborated body to the new}{}\<[E]%
\\
\>[3]{}\mbox{\onelinecomment  meta-variable, together with the constraints generated from elaborating the body:}{}\<[E]%
\\
\>[3]{}\llbracket \Sigma ;\Gamma \vdash \lambda \Varid{x}\to \Varid{t}\mathbin{:}\Conid{A}\rrbracket \leadsto \Sigma_4 ,{}\<[E]%
\\
\>[3]{}\alpha \;\Gamma ,{}\<[E]%
\\
\>[3]{}\{\mskip1.5mu \Gamma \vdash (\lambda \Varid{x}\to \Varid{u})\mathbin{:}(\Varid{x}\mathbin{:}\beta \;\Gamma )\to \gamma \;\Gamma \;\Varid{x}\mathrel{=}\alpha \;\Gamma \mathbin{:}\Conid{A}\mskip1.5mu\}\cup \mathcal{C} {}\<[E]%
\ColumnHook
\end{hscode}\resethooks
Note how the use of heterogeneous equality is crucial if we want to
avoid to ever commit to the types having a particular shape, while
having all the constraints to be well-formed immediately.

Every rule in full algorithm is going to follow the general pattern that
emerged above:
\begin{enumerate}[(a)]
\item Elaborate sub-terms, adding meta-variables to have types to work
  with;
\item Add a meta-variable serving as the elaborated term, say \ensuremath{\alpha };
\item Return a constraint equating \ensuremath{\alpha } to a term properly
  constructed using the elaborated subterms; together with the
  constraints returned by elaborating said subterms.
\end{enumerate}

For brevity we present the full algorithm using an abbreviated notation
that implicitely threads the signatures (\ensuremath{\Sigma }, \ensuremath{\Sigma_1 }, \ensuremath{\Sigma_2 }, and
\ensuremath{\Sigma_3 } in the example above) across rules.  We will also use the macro
\ensuremath{\textsc{Fresh}(\_ ,\_ )} to add new meta-variables in a context, such that \ensuremath{\alpha \mathbin{:=}\textsc{Fresh}(\Gamma ,\Conid{A})} is equivalent to \ensuremath{\Xi \mathbin{:=}\Sigma ,\alpha \mathbin{:}\Gamma \to \Conid{A}}, and
successive appearances of \ensuremath{\alpha } are automatically applied to \ensuremath{\Gamma },
and where \ensuremath{\Sigma } and \ensuremath{\Xi } are the old and new signature---which are, as
said, implicitly threaded.  Finally, we implicitly collect the
constraints generated when elaborating the subterms, and implicitly add
a meta-variable that stands for the elaborated term, together with an
appropriate constraint---steps (b) and (c) in the process described
above.\footnote{In our Haskell implementation, the elaboration is a
  state and writer monad over the signature and the constraint set,
  respectively.} For example, the rule to elaborate abstractions,
explained before, will be shown as
\[
\inference{
  \ensuremath{\beta \mathbin{:=}\textsc{Fresh}(\Gamma ,\mathsf{Set} )} & \ensuremath{\gamma \mathbin{:=}\textsc{Fresh}(\Gamma ,\Varid{x}\mathbin{:}\beta ,\mathsf{Set} )} \\
  \ensuremath{\llbracket \Gamma ,\Varid{x}\mathbin{:}\beta \vdash \Varid{t}\mathbin{:}\gamma \rrbracket \leadsto \Varid{u}}
}{
  \ensuremath{\llbracket \Gamma \vdash \lambda \Varid{x}\to \Varid{t}\mathbin{:}\Conid{A}\rrbracket \leadsto (\lambda \Varid{x}\to \Varid{u})\mathbin{:}(\Varid{x}\mathbin{:}\beta )\to \gamma }
}
\]

The complete algorithm is shown in figure \ref{elaboration}.  They are
remarkably similar to the typing rules, however instead of matching
directly on the type we expect we match through constraints.

The rules can easily be turned into an algorithm which pattern matches
on the term and decides how to proceed.\footnote{Our implementation can
  be found at
  \url{https://github.com/bitonic/tog/blob/master/src/Tog/Elaborate.hs}.}
Naturally the algorithm can still fail if it encounters an out of scope
meta-variable or variable, although in real systems scope checking is
usually performed beforehand.

\begin{figure}
  \[
  \inference{}{
    \ensuremath{\llbracket \Gamma \vdash \mathsf{Set} \mathbin{:}\Conid{A}\rrbracket \leadsto \mathsf{Set} \mathbin{:}\mathsf{Set} }
  }\hfill
  \inference{}{
    \ensuremath{\llbracket \Gamma \vdash \mathsf{Bool} \mathbin{:}\Conid{A}\rrbracket \leadsto \mathsf{Bool} \mathbin{:}\mathsf{Set} }
  }\hfill
  \inference{}{
    \ensuremath{\llbracket \Gamma \vdash \mathsf{true} \mathbin{:}\Conid{A}\rrbracket \leadsto \mathsf{true} \mathbin{:}\mathsf{Bool} }
  }
  \]
  \[
  \inference{}{
    \ensuremath{\llbracket \Gamma \vdash \mathsf{false} \mathbin{:}\Conid{A}\rrbracket \leadsto \mathsf{false} \mathbin{:}\mathsf{Bool} }
  }\hfill
  \inference{
    \ensuremath{\llbracket \Gamma \vdash \Conid{A}\mathbin{:}\mathsf{Set} \rrbracket \leadsto \Conid{A'}} & \ensuremath{\Gamma ,\Varid{x}\mathbin{:}\Conid{A'}\vdash \Conid{B}\mathbin{:}\mathsf{Set} \leadsto \Conid{B'}}
  }{
    \ensuremath{\llbracket \Gamma \vdash (\Varid{x}\mathbin{:}\Conid{A})\to \Conid{B}\mathbin{:}\Conid{S}\rrbracket \leadsto (\Varid{x}\mathbin{:}\Conid{A'})\to \Conid{B'}\mathbin{:}\mathsf{Set} }
  }
  \]
  \[
  \inference{
    \ensuremath{\beta \mathbin{:=}\textsc{Fresh}(\Gamma ,\mathsf{Set} )} & \ensuremath{\gamma \mathbin{:=}\textsc{Fresh}(\Gamma ,\Varid{x}\mathbin{:}\beta ,\mathsf{Set} )} \\
    \ensuremath{\llbracket \Gamma ,\Varid{x}\mathbin{:}\beta \vdash \Varid{t}\mathbin{:}\gamma \rrbracket \leadsto \Varid{u}}
  }{
    \ensuremath{\llbracket \Gamma \vdash \lambda \Varid{x}\to \Varid{t}\mathbin{:}\Conid{A}\rrbracket \leadsto (\lambda \Varid{x}\to \Varid{u})\mathbin{:}(\Varid{x}\mathbin{:}\beta )\to \gamma }
  }\hfill
  \inference{
    \ensuremath{\Varid{x}\mathbin{:}\Conid{A}\in \Gamma }
  }{
    \ensuremath{\llbracket \Gamma \vdash \Varid{x}\rrbracket \leadsto \Varid{x}\mathbin{:}\Conid{A}}
  }
  \]
  \[
  \inference{
    \ensuremath{\alpha \mathbin{:}\Conid{A}\in \Sigma }
  }{
    \ensuremath{\llbracket \Gamma \vdash \alpha \rrbracket \leadsto \alpha \mathbin{:}\Conid{A}}
  }\hfill
  \inference{
    \ensuremath{\beta \mathbin{:=}\textsc{Fresh}(\Gamma ,\mathsf{Set} )} & \ensuremath{\gamma \mathbin{:=}\textsc{Fresh}(\Gamma ,\Varid{x}\mathbin{:}\beta ,\mathsf{Set} )} \\
    \ensuremath{\llbracket \Gamma \vdash \Varid{n}\mathbin{:}(\Varid{x}\mathbin{:}\beta )\to \gamma \rrbracket \leadsto \Varid{t}} &
    \ensuremath{\llbracket \Gamma \vdash \Varid{u}\mathbin{:}\beta \rrbracket \leadsto \Varid{v}}
  }{
    \ensuremath{\llbracket \Gamma \vdash \Varid{n}\;\Varid{u}\mathbin{:}\Conid{A}\rrbracket \leadsto \Varid{t}\;\Varid{v}\mathbin{:}\gamma [\Varid{x}\mathbin{:=}\Varid{v}]}
  }
  \]
  \[
  \inference{
    \ensuremath{\llbracket \Gamma ,\Varid{x}\mathbin{:}\mathsf{Bool} \vdash \Conid{B}\mathbin{:}\mathsf{Set} \rrbracket \leadsto \Conid{B'}} & \ensuremath{\llbracket \Gamma \vdash \Varid{n}\mathbin{:}\mathsf{Bool} \rrbracket \leadsto \Varid{t}} \\
    \ensuremath{\llbracket \Gamma \vdash \Varid{u}\mathbin{:}\Conid{B'}[\Varid{x}\mathbin{:=}\mathsf{true} ]\rrbracket \leadsto \Varid{u'}} & \ensuremath{\llbracket \Gamma \vdash \Varid{v}\mathbin{:}\Conid{B'}[\Varid{x}\mathbin{:=}\mathsf{false} ]\rrbracket \leadsto \Varid{v'}}
  }{
    \ensuremath{\llbracket \Gamma \vdash \mathbf{if}\;\Varid{n}\;/\Varid{x}.\Conid{B}\;\mathbf{then}\;\Varid{u}\;\mathbf{else}\;\Varid{v}\mathbin{:}\Conid{A}\rrbracket \leadsto \mathbf{if}\;\Varid{t}\;/\Varid{x}.\Conid{B'}\;\mathbf{then}\;\Varid{u'}\;\mathbf{else}\;\Varid{v'}\mathbin{:}\Conid{B'}[\Varid{x}\mathbin{:=}\Varid{t}]}
  }
  \]
  \caption{\boxed{\ensuremath{\llbracket \Sigma ;\Gamma \vdash \Varid{t}\mathbin{:}\Conid{A}\rrbracket \leadsto \Xi ,\Varid{u},\mathcal{C} }} Elaboration}
  \label{elaboration}
\end{figure}

\subsection{Examples}

We will explore how the algorithm works by going through various common
situations.  The reader can experiment using the mentioned \texttt{tog}
tool, passing the \texttt{-d 'elaborate'} command-line flag to have it
to print out the generated constraints.  A wealth of examples are
present in the repository.  We will assume the usage of pattern
unification to solve constraints when examining the examples.

\subsubsection{A simple problem}

Let's take type-checking problem
\begin{hscode}\SaveRestoreHook
\column{B}{@{}>{\hspre}l<{\hspost}@{}}%
\column{3}{@{}>{\hspre}l<{\hspost}@{}}%
\column{E}{@{}>{\hspre}l<{\hspost}@{}}%
\>[3]{}\cdot ;\cdot \vdash \mathsf{true} \mathbin{:}\mathsf{Bool} .{}\<[E]%
\ColumnHook
\end{hscode}\resethooks
The algorithm will return the triple
\begin{hscode}\SaveRestoreHook
\column{B}{@{}>{\hspre}l<{\hspost}@{}}%
\column{3}{@{}>{\hspre}l<{\hspost}@{}}%
\column{E}{@{}>{\hspre}l<{\hspost}@{}}%
\>[3]{}\alpha \mathbin{:}\mathsf{Bool} ,\alpha ,\{\mskip1.5mu \cdot \vdash \mathsf{true} \mathbin{:}\mathsf{Bool} \mathrel{=}\alpha \mathbin{:}\mathsf{Bool} \mskip1.5mu\}{}\<[E]%
\ColumnHook
\end{hscode}\resethooks
The constraint is immediately solvable, yielding the substitution \ensuremath{\alpha \mathbin{:=}\mathsf{true} }, which will restore definitional equality between the
elaborated term and \ensuremath{\mathsf{true} }.

\subsubsection{An ill-typed problem}

Now for something that should fail. Given
\begin{hscode}\SaveRestoreHook
\column{B}{@{}>{\hspre}l<{\hspost}@{}}%
\column{3}{@{}>{\hspre}l<{\hspost}@{}}%
\column{E}{@{}>{\hspre}l<{\hspost}@{}}%
\>[3]{}\Varid{add}\mathbin{:}\mathsf{Nat} \to \mathsf{Nat} \to \mathsf{Nat} ,{}\<[E]%
\ColumnHook
\end{hscode}\resethooks
we want to solve
\begin{hscode}\SaveRestoreHook
\column{B}{@{}>{\hspre}l<{\hspost}@{}}%
\column{3}{@{}>{\hspre}l<{\hspost}@{}}%
\column{E}{@{}>{\hspre}l<{\hspost}@{}}%
\>[3]{}\cdot ;\Varid{x}\mathbin{:}\mathsf{Nat} \vdash \Varid{add}\;\Varid{x}\mathbin{:}\mathsf{Nat} .{}\<[E]%
\ColumnHook
\end{hscode}\resethooks
The algorithm will return the triple
\begin{hscode}\SaveRestoreHook
\column{B}{@{}>{\hspre}l<{\hspost}@{}}%
\column{3}{@{}>{\hspre}l<{\hspost}@{}}%
\column{7}{@{}>{\hspre}l<{\hspost}@{}}%
\column{13}{@{}>{\hspre}l<{\hspost}@{}}%
\column{16}{@{}>{\hspre}l<{\hspost}@{}}%
\column{E}{@{}>{\hspre}l<{\hspost}@{}}%
\>[3]{}\Sigma ,\zeta \;\Varid{x},\mathcal{C} \;\mathbf{where}{}\<[E]%
\\
\>[3]{}\hsindent{4}{}\<[7]%
\>[7]{}\Sigma \mathrel{=}{}\<[13]%
\>[13]{}\alpha \mathbin{:}\mathsf{Nat} \to \mathsf{Nat} ,\;\beta \mathbin{:}(\Varid{x}\mathbin{:}\mathsf{Nat} )\to \mathsf{Nat} \to \mathsf{Set} ,\;\gamma \mathbin{:}(\Varid{x}\mathbin{:}\mathsf{Nat} )\to \alpha \;\Varid{x},{}\<[E]%
\\
\>[13]{}\delta \mathbin{:}(\Varid{x}\mathbin{:}\mathsf{Nat} )\to (\Varid{y}\mathbin{:}\alpha \;\Varid{x})\to \beta \;\Varid{x}\;\Varid{y},\;\zeta \mathbin{:}\mathsf{Nat} \to \mathsf{Nat} {}\<[E]%
\\[\blanklineskip]%
\>[3]{}\hsindent{4}{}\<[7]%
\>[7]{}\mathcal{C} \mathrel{=}\{\mskip1.5mu {}\<[16]%
\>[16]{}\Varid{x}\mathbin{:}\mathsf{Nat} \vdash \delta \;\Varid{x}\;(\gamma \;\Varid{x})\mathbin{:}\beta \;\Varid{x}\;(\gamma \;\Varid{x})\mathrel{=}\zeta \;\Varid{x}\mathbin{:}\mathsf{Nat} ,{}\<[E]%
\\
\>[16]{}\Varid{x}\mathbin{:}\mathsf{Nat} \vdash \Varid{add}\mathbin{:}\mathsf{Nat} \to \mathsf{Nat} \to \mathsf{Nat} \mathrel{=}\delta \;\Varid{x}\mathbin{:}(\Varid{y}\mathbin{:}\alpha \;\Varid{x})\to \beta \;\Varid{x}\;\Varid{y},{}\<[E]%
\\
\>[16]{}\Varid{x}\mathbin{:}\mathsf{Nat} \vdash \Varid{x}\mathbin{:}\mathsf{Nat} \mathrel{=}\gamma \;\Varid{x}\mathbin{:}\alpha \;\Varid{x}\mskip1.5mu\}{}\<[E]%
\ColumnHook
\end{hscode}\resethooks
While looking scary at first, the meaning of the meta-variables and
constraints is easy to interpret, keeping in mind that we generate one
constraint per subterm (including the top level term).

At the top level, we elaborate the two subterms \ensuremath{\Varid{add}} and \ensuremath{\Varid{x}}.  We know
that \ensuremath{\Varid{add}} must be a function type, since it is applied to something;
and that the type of \ensuremath{\Varid{x}} must match the type of the domain of said
function type. Thus, two meta-variables are created to represent the
domain and codomain---\ensuremath{\alpha } and \ensuremath{\beta }.  Then, \ensuremath{\Varid{add}} and \ensuremath{\Varid{x}} are
elaborated with said types, which in turn requires the addition of
\ensuremath{\gamma } and \ensuremath{\delta }, serving as elaborated terms.  These are the
ingredients for the second and third constraint. Finally, the elaborated
\ensuremath{\delta } (representing \ensuremath{\Varid{add}}) is applied to \ensuremath{\gamma } (representing \ensuremath{\Varid{x}}),
and equated to a new meta-variable \ensuremath{\zeta }, which is the result of the
top-level elaboration.

The constraints reflect the fact that the term is ill-typed: \ensuremath{\beta } is
equated both to \ensuremath{\mathsf{Nat} } (in the first constraint), and to \ensuremath{\mathsf{Nat} \to \mathsf{Nat} }, in
the second constraint.  Thus, unification will fail.

\subsubsection{An unsolvable problem}

Finally, let's go back to an example which cannot immediately be solved
in its entirety:
\begin{hscode}\SaveRestoreHook
\column{B}{@{}>{\hspre}l<{\hspost}@{}}%
\column{3}{@{}>{\hspre}l<{\hspost}@{}}%
\column{E}{@{}>{\hspre}l<{\hspost}@{}}%
\>[3]{}\alpha \mathbin{:}\mathsf{Nat} ;\cdot \vdash (\mathsf{true} ,\mathrm{0})\mathbin{:}\Conid{BoolOrNat}\;\alpha \times \mathsf{Nat} .{}\<[E]%
\ColumnHook
\end{hscode}\resethooks
The desired outcome for this problem is to type check the second element
of the pair, \ensuremath{\mathrm{0}}; but ``suspend'' the type checking of the first element
by replacing \ensuremath{\mathsf{refl} } with a meta-variable.  Running the type-checking
problem through the elaboration procedure yields
\begin{hscode}\SaveRestoreHook
\column{B}{@{}>{\hspre}l<{\hspost}@{}}%
\column{3}{@{}>{\hspre}l<{\hspost}@{}}%
\column{7}{@{}>{\hspre}l<{\hspost}@{}}%
\column{13}{@{}>{\hspre}l<{\hspost}@{}}%
\column{14}{@{}>{\hspre}c<{\hspost}@{}}%
\column{14E}{@{}l@{}}%
\column{17}{@{}>{\hspre}l<{\hspost}@{}}%
\column{E}{@{}>{\hspre}l<{\hspost}@{}}%
\>[3]{}\Sigma ,\zeta \;\Varid{x},\mathcal{C} \;\mathbf{where}{}\<[E]%
\\
\>[3]{}\hsindent{4}{}\<[7]%
\>[7]{}\Sigma \mathrel{=}{}\<[13]%
\>[13]{}\beta \mathbin{:}\mathsf{Set} ,\;\gamma \mathbin{:}\mathsf{Set} ,\;\delta \mathbin{:}\beta ,\;\zeta \mathbin{:}\gamma ,\;\iota \mathbin{:}\Conid{BoolOrNat}\;\alpha \times \mathsf{Nat} {}\<[E]%
\\[\blanklineskip]%
\>[3]{}\hsindent{4}{}\<[7]%
\>[7]{}\mathcal{C} \mathrel{=}{}\<[14]%
\>[14]{}\{\mskip1.5mu {}\<[14E]%
\>[17]{}\cdot \vdash (\delta ,\zeta )\mathbin{:}\beta \times \gamma \mathrel{=}\iota \mathbin{:}\Conid{BoolOrNat}\;\alpha \times \mathsf{Nat} ,{}\<[E]%
\\
\>[17]{}\cdot \vdash \mathrm{0}\mathbin{:}\mathsf{Nat} \mathrel{=}\zeta \mathbin{:}\gamma ,{}\<[E]%
\\
\>[17]{}\cdot \vdash \mathsf{true} \mathbin{:}\mathsf{Bool} \mathrel{=}\delta \mathbin{:}\beta \mskip1.5mu\}{}\<[E]%
\ColumnHook
\end{hscode}\resethooks
The second and third constraints regard the two sub-problems
individually, and are solvable yielding the substitution
\begin{hscode}\SaveRestoreHook
\column{B}{@{}>{\hspre}l<{\hspost}@{}}%
\column{3}{@{}>{\hspre}l<{\hspost}@{}}%
\column{E}{@{}>{\hspre}l<{\hspost}@{}}%
\>[3]{}\beta \mathbin{:=}\mathsf{Bool} ,\delta \mathbin{:=}\mathsf{true} ,\gamma \mathbin{:=}\mathsf{Nat} ,\zeta \mathbin{:=}\mathrm{0},{}\<[E]%
\ColumnHook
\end{hscode}\resethooks
and leaving us with
\begin{hscode}\SaveRestoreHook
\column{B}{@{}>{\hspre}l<{\hspost}@{}}%
\column{3}{@{}>{\hspre}l<{\hspost}@{}}%
\column{E}{@{}>{\hspre}l<{\hspost}@{}}%
\>[3]{}\cdot \vdash (\mathsf{true} ,\mathrm{0})\mathbin{:}\mathsf{Bool} \times \mathsf{Nat} \mathrel{=}\iota \mathbin{:}\Conid{BoolOrNat}\;\alpha \times \mathsf{Nat} .{}\<[E]%
\ColumnHook
\end{hscode}\resethooks
At this point the unifier will be able to substitute \ensuremath{\iota } with a pair
\begin{hscode}\SaveRestoreHook
\column{B}{@{}>{\hspre}l<{\hspost}@{}}%
\column{3}{@{}>{\hspre}l<{\hspost}@{}}%
\column{E}{@{}>{\hspre}l<{\hspost}@{}}%
\>[3]{}\cdot \vdash (\mathsf{true} ,\mathrm{0})\mathbin{:}\mathsf{Bool} \times \mathsf{Nat} \mathrel{=}(\xi ,\mathrm{0})\mathbin{:}\Conid{BoolOrNat}\;\alpha \times \mathsf{Nat} .{}\<[E]%
\ColumnHook
\end{hscode}\resethooks
Now solving the constraint amounts to solve
\begin{hscode}\SaveRestoreHook
\column{B}{@{}>{\hspre}l<{\hspost}@{}}%
\column{3}{@{}>{\hspre}l<{\hspost}@{}}%
\column{E}{@{}>{\hspre}l<{\hspost}@{}}%
\>[3]{}\cdot \vdash \mathsf{true} \mathbin{:}\mathsf{Bool} \mathrel{=}\xi \mathbin{:}\Conid{BoolOrNat}\;\alpha ,{}\<[E]%
\ColumnHook
\end{hscode}\resethooks
but the unifier is not going to be able to make progress, since
substituting \ensuremath{\xi } for \ensuremath{\mathsf{true} } will leave the constraint ill-formed.
Thus, the final result of the elaboration and unification will be \ensuremath{(\xi ,\mathrm{0})}, which is all we could hope for.

\subsection{Some properties}
\label{some-properties}

\begin{lemma}[Well-formedness and functionality of elaboration]
  Let  \ensuremath{\vdash \Sigma } and \ensuremath{\Sigma \vdash \Gamma }.
  \begin{enumerate}
  \item
  There uniquely exist
  a signature \ensuremath{\Xi }, constraints \ensuremath{\mathcal{C} }, a term \ensuremath{\Varid{t'}}, and a type \ensuremath{\Conid{A}}  such that
  \begin{hscode}\SaveRestoreHook
\column{B}{@{}>{\hspre}l<{\hspost}@{}}%
\column{5}{@{}>{\hspre}l<{\hspost}@{}}%
\column{E}{@{}>{\hspre}l<{\hspost}@{}}%
\>[5]{}\llbracket \Sigma ;\Gamma \vdash \Varid{t}\rrbracket \leadsto \Xi \ |\  \mathcal{C} \vdash \Varid{t'}\Rightarrow \Conid{A}{}\<[E]%
\ColumnHook
\end{hscode}\resethooks
  and
  \ensuremath{\Xi ;\Gamma \vdash \Conid{A}\mathbin{:}\mathsf{Set} }.
  \item
  If \ensuremath{\Sigma ;\Gamma \vdash \Conid{A}\mathbin{:}\mathsf{Set} }, then there uniquely exist
  a signature \ensuremath{\Xi }, constraints \ensuremath{\mathcal{C} }, and a term \ensuremath{\Varid{t'}} such that
  \begin{hscode}\SaveRestoreHook
\column{B}{@{}>{\hspre}l<{\hspost}@{}}%
\column{5}{@{}>{\hspre}l<{\hspost}@{}}%
\column{E}{@{}>{\hspre}l<{\hspost}@{}}%
\>[5]{}\llbracket \Sigma ;\Gamma \vdash \Varid{t}\Leftarrow \Conid{A}\rrbracket \leadsto \Xi \ |\  \mathcal{C} \vdash \Varid{t'}.{}\<[E]%
\ColumnHook
\end{hscode}\resethooks
  \end{enumerate}
In both cases, \ensuremath{\Xi } is necessarily an extension of \ensuremath{\Sigma } and
all the outputs are well-formed, meaning
  \ensuremath{\vdash \Xi } and
  \ensuremath{\Xi ;\Gamma \vdash \Varid{t'}\mathbin{:}\Conid{A}} and
  \ensuremath{\Xi ;\Gamma \vdash \mathcal{C} }.
\end{lemma}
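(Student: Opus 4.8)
The plan is to prove the two statements simultaneously by structural induction on the term $\Varid{t}$, appealing to statement~(1) on the subterms that the elaboration rules infer and to statement~(2) on the subterms that they check; since every subterm occurring in a rule of Figure~\ref{elaboration} is structurally smaller, this is well founded. \emph{Functionality} (the ``uniquely exist'' clauses) is the easy half: the rules are syntax-directed---the shape of $\Varid{t}$ selects exactly one rule, and whether we are in inference or checking mode is itself determined by whether $\Varid{t}$ is neutral---and every component of the output is then read off deterministically, up to the (name-irrelevant) choices made by $\textsc{Fresh}$. So once existence is established, uniqueness follows by the same case analysis. The substance of the lemma is the three well-formedness conclusions $\vdash\Xi$, $\Xi;\Gamma\vdash\Varid{t'}\mathbin{:}\Conid{A}$, and well-formedness of every constraint in $\mathcal{C}$ with respect to $\Xi$ (together with $\Xi;\Gamma\vdash\Conid{A}\mathbin{:}\mathsf{Set}$ in the inference case).

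Before the induction I would prove a \emph{signature-weakening} lemma: if $\vdash\Xi$ and $\Xi$ extends $\Sigma$, then $\Sigma;\Gamma\vdash\mathcal{J}$ entails $\Xi;\Gamma\vdash\mathcal{J}$ for every judgement form $\mathcal{J}$ used in the development---term typing, type inference, conversion, context validity, and constraint well-formedness. This is a routine induction on derivations, but it is exactly what makes the main argument compositional: each recursive elaboration call enlarges the signature, and weakening transports the typings of already-elaborated subterms, the already-collected constraints, and the well-formedness of the declared types of earlier fresh meta-variables up into the final signature $\Xi$. For the base cases: a neutral head ($\Varid{x}$ or $\alpha$) is elaborated in inference mode with $\Xi=\Sigma$, empty $\mathcal{C}$, $\Varid{t'}$ the head itself, and $\Conid{A}$ the type recorded in $\Gamma$ or $\Sigma$; then $\Xi;\Gamma\vdash\Varid{t'}\mathbin{:}\Conid{A}$ is the corresponding inference axiom and $\Xi;\Gamma\vdash\Conid{A}\mathbin{:}\mathsf{Set}$ comes from validity of $\Gamma$ (resp.\ $\Sigma$). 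A canonical literal ($\mathsf{Set}$, $\mathsf{Bool}$, $\mathsf{true}$, $\mathsf{false}$) checked against $\Conid{A}$ emits the literal together with a fresh $\alpha\mathbin{:=}\textsc{Fresh}(\Gamma,\Conid{A})$ and the single constraint $\{\Gamma\vdash\ell\mathbin{:}\Conid{T}\mathrel{=}\alpha\,\Gamma\mathbin{:}\Conid{A}\}$, where $\ell$ is the literal and $\Conid{T}$ its axiomatic type; then $\vdash\Xi$ for $\Xi=\Sigma,\alpha\mathbin{:}\Gamma\to\Conid{A}$ follows from $\Sigma\vdash\Gamma$ and the hypothesis $\Sigma;\Gamma\vdash\Conid{A}\mathbin{:}\mathsf{Set}$; $\Xi;\Gamma\vdash\alpha\,\Gamma\mathbin{:}\Conid{A}$ follows by the head and application inference rules; and the constraint is well-formed because its left side is the literal's typing axiom and its right side is exactly $\Xi;\Gamma\vdash\alpha\,\Gamma\mathbin{:}\Conid{A}$.

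The inductive cases ($\Pi$-types, $\lambda$, application, $\mathbf{if}$) all follow the general pattern laid out above: introduce the fresh meta-variables---checking that their declared types are well-formed in the signature current at that point, using the weakened well-formedness of earlier outputs---apply the induction hypothesis to each subterm in turn (in the $\Pi$-case the domain's elaborated type $\Conid{A'}$ justifies the extended context $\Gamma,\Varid{x}\mathbin{:}\Conid{A'}$ used for the codomain, and in the $\lambda$-case the extension $\Gamma,\Varid{x}\mathbin{:}\beta\,\Gamma$ used for the body is justified by freshness of $\beta$), and finally add the top-level $\alpha\mathbin{:=}\textsc{Fresh}(\Gamma,\Conid{A})$ with the constraint equating the reconstructed term to $\alpha\,\Gamma$. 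In each case the two points to verify are: (i) the reconstructed term---$(\lambda\Varid{x}\to\Varid{u})$, or $\Varid{t}\,\Varid{v}$, or the $\mathbf{if}$-term---is well-typed at the reconstructed type, which is the matching typing or inference rule applied to the IH outputs, with the output type's well-formedness following by the substitution property of the theory when, as for application and $\mathbf{if}$, that type contains a substitution; and (ii) the emitted heterogeneous constraint is well-formed, which is immediate because both of its sides were just shown to be well-typed. I expect the only real friction to be in establishing and deploying the weakening lemma uniformly across all judgement forms, and---within the case analysis---lining up, in the application and $\mathbf{if}$ cases, the telescope over which the codomain or motive meta-variable was declared with the substitution appearing in the reconstructed output type; everything else is bookkeeping.
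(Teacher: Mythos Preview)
Your proposal is correct and follows the same approach as the paper, which simply states ``By induction on $t$.'' Your expansion---making the signature-weakening lemma explicit, separating the functionality argument from the well-formedness argument, and spelling out how each rule's fresh meta-variables and emitted constraint are validated---is exactly the detail the paper elides; there is no methodological divergence.
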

\begin{proof}
  By induction on $t$.
\end{proof}

\begin{lemma}[Soundness of elaboration]
\hfill\\
  If one of
  \begin{hscode}\SaveRestoreHook
\column{B}{@{}>{\hspre}l<{\hspost}@{}}%
\column{5}{@{}>{\hspre}l<{\hspost}@{}}%
\column{E}{@{}>{\hspre}l<{\hspost}@{}}%
\>[5]{}\llbracket \Sigma ;\Gamma \vdash \Varid{t}\rrbracket \leadsto \Xi \ |\  \mathcal{C} \vdash \Varid{t'}\Rightarrow \Conid{A}{}\<[E]%
\\
\>[5]{}\llbracket \Sigma ;\Gamma \vdash \Varid{t}\Leftarrow \Conid{A}\rrbracket \leadsto \Xi \ |\  \mathcal{C} \vdash \Varid{t'}{}\<[E]%
\ColumnHook
\end{hscode}\resethooks
  and there is some well-typed meta substitution
  \ensuremath{\Sigma' \vdash \theta \mathbin{:}\Xi } that solves the constraints,
  i.e., \ensuremath{\Sigma' \Vdash \theta  \mathcal{C} }, then
  \begin{hscode}\SaveRestoreHook
\column{B}{@{}>{\hspre}l<{\hspost}@{}}%
\column{5}{@{}>{\hspre}l<{\hspost}@{}}%
\column{E}{@{}>{\hspre}l<{\hspost}@{}}%
\>[5]{}\Sigma' ,\theta  \Gamma \vdash \theta  \Varid{t}\equiv \theta  \Varid{t'}\mathbin{:}\theta  \Conid{A}.{}\<[E]%
\ColumnHook
\end{hscode}\resethooks
\end{lemma}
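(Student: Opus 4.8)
The plan is to prove this by induction on the term $t$ (equivalently, by induction on the elaboration derivation, which by the previous lemma is uniquely determined by $t$), treating the inference ($\Rightarrow$) and checking ($\Leftarrow$) modes simultaneously. Two auxiliary facts are used throughout and proved beforehand by routine inductions: (a) meta-substitution commutes with term and context substitution, so that $\theta$ distributes over every substituted type occurring in Figures~\ref{elaboration}, \ref{term-conversion}, \ref{spine-conversion} (e.g.\ $\theta(\gamma[x{:=}v]) = (\theta\gamma)[x{:=}\theta v]$); and (b) \emph{functionality} of definitional equality: from $\Gamma,x{:}A \vdash B \equiv B' : \mathsf{Set}$ and $\Gamma \vdash t \equiv t' : A$ one derives $\Gamma \vdash B[x{:=}t] \equiv B'[x{:=}t'] : \mathsf{Set}$, and likewise for terms. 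I also record that ``$\Sigma' \Vdash \theta\mathcal{C}$'' unfolds, for each $(\Gamma' \vdash t_1{:}A_1 = t_2{:}A_2) \in \mathcal{C}$, to $\Sigma';\theta\Gamma' \vdash \theta A_1 \equiv \theta A_2 : \mathsf{Set}$ together with $\Sigma';\theta\Gamma' \vdash \theta t_1 \equiv \theta t_2 : \theta A_1$, and that definitional equality is reflexive on well-typed terms (an easy induction on typing), which already settles the variable and meta-variable cases, where no new metas or constraints are produced.

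Every remaining case follows one pattern. Elaborating $t$ makes recursive calls on the immediate subterms $t_i$ (in contexts $\Gamma_i$ extending $\Gamma$, against types built from freshly introduced metas), producing $t'_i$, and then assembles $t'$ from the $t'_i$ and, in checking mode, one final fresh meta $\alpha$; one new constraint $\kappa$ is emitted. To invoke the induction hypothesis on the $i$-th call I use the restriction of $\theta$ to the sub-signature it outputs: this restriction is again a well-typed meta-substitution (the relevant signature is a prefix of $\Xi$, so the truncation is legal by the well-formedness lemma), and it still solves the sub-constraints, which are contained in $\mathcal{C}$ and only mention metas of that sub-signature. The induction hypothesis then yields $\Sigma';\theta\Gamma_i \vdash \theta t_i \equiv \theta t'_i : \theta A_i$. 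Feeding these into the congruence rules of Figures~\ref{term-conversion} and~\ref{spine-conversion} (function-congruence for $\lambda$; application for $n\,u$; the $\mathbf{if}$-congruence for the eliminator), with fact~(b) used to align the dependent result types $\gamma[x{:=}v]$ and the branch types $B'[x{:=}\mathsf{true}]$, $B'[x{:=}\mathsf{false}]$, gives $\Sigma';\theta\Gamma \vdash \theta t \equiv \theta s : \theta B$, where $s{:}B$ is the elaborated term before the final meta-wrapping ($s = t'$, $B = A$ in inference mode). In checking mode $\kappa$ is $\Gamma \vdash s : B = \alpha\,\Gamma : A$; since $\theta$ solves it we get $\theta B \equiv \theta A : \mathsf{Set}$ and $\theta s \equiv \theta(\alpha\,\Gamma) : \theta A$, and transitivity together with the type-conversion rule of Figure~\ref{term-conversion} closes the goal $\Sigma';\theta\Gamma \vdash \theta t \equiv \theta t' : \theta A$. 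The extra constraints attached to neutral-elimination nodes (relating the elaborated head to its function-typed view, resp.\ the elaborated scrutinee to $\mathsf{Bool}$) are discharged in exactly the same way as this wrapping step.

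I expect the genuine obstacle to be the neutral-elimination cases (application and $\mathbf{if}$). The conversion judgments of Figure~\ref{spine-conversion} are type-directed and synthesise the result type from the head, but applying $\theta$ can turn a neutral subterm of $t$ into a canonical one (a meta may be solved to a $\lambda$, or to $\mathsf{true}$/$\mathsf{false}$), so one cannot merely mirror those rules; instead the argument must be routed through the type-directed term conversion of Figure~\ref{term-conversion}, reconciling the result type carried by the elaborated application/eliminator with the true result type of $\theta(n\,u)$ (resp.\ $\theta(\mathbf{if}\,\dots)$) via functionality. This in turn calls for a small regularity/inversion argument keeping every intermediate conversion well-formed --- both sides well-typed at a common type --- since in this presentation $\equiv$ is given by algorithmic rules rather than presupposing typing; the well-formedness lemma supplies well-typedness of the elaborated side, and the usual inversion lemmas for $\equiv$ the rest. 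Once functionality and this book-keeping are available, the individual cases close mechanically.
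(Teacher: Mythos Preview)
Your approach is the same as the paper's: both proceed by induction on the term $t$. The paper's proof, however, consists of nothing more than the single line ``By induction on term $t$'', so your proposal is already far more detailed than what appears there; the auxiliary facts you isolate (commutation of meta-substitution with hereditary substitution, functionality of conversion, and the care needed when $\theta$ collapses a neutral head) are exactly the ingredients one would have to supply to make that one-line proof go through.
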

\begin{proof}
  By induction on term \ensuremath{\Varid{t}}.
\end{proof}

\begin{lemma}[Strong soundness of elaboration] \hfill
  \begin{enumerate}
  \item  If
    \begin{hscode}\SaveRestoreHook
\column{B}{@{}>{\hspre}l<{\hspost}@{}}%
\column{7}{@{}>{\hspre}l<{\hspost}@{}}%
\column{E}{@{}>{\hspre}l<{\hspost}@{}}%
\>[7]{}\llbracket \Sigma ;\Gamma \vdash \Varid{t}\rrbracket \leadsto \Xi \ |\  \mathcal{C} \vdash \Varid{t'}\Rightarrow \Conid{A}{}\<[E]%
\ColumnHook
\end{hscode}\resethooks
    and there is some closing untyped substitution $\sigma$ such that
    $\dom(\sigma) \supseteq \dom(\Xi)$ and
    $\der \sigma : \Sigma$ and
    $\sigma\Gamma \der \sigma A : \Set$ and
    $\sigma$ solves the untyped constraints, then
    \begin{hscode}\SaveRestoreHook
\column{B}{@{}>{\hspre}l<{\hspost}@{}}%
\column{5}{@{}>{\hspre}l<{\hspost}@{}}%
\column{E}{@{}>{\hspre}l<{\hspost}@{}}%
\>[5]{}\sigma  \Gamma \vdash \sigma  \Varid{t}\equiv \sigma  \Varid{t'}\mathbin{:}\sigma  \Conid{A}.{}\<[E]%
\ColumnHook
\end{hscode}\resethooks
  \item
  If
  \begin{hscode}\SaveRestoreHook
\column{B}{@{}>{\hspre}l<{\hspost}@{}}%
\column{5}{@{}>{\hspre}l<{\hspost}@{}}%
\column{49}{@{}>{\hspre}l<{\hspost}@{}}%
\column{E}{@{}>{\hspre}l<{\hspost}@{}}%
\>[5]{}\llbracket \Sigma ;\Gamma \vdash \Varid{t}\Leftarrow \Conid{A}\rrbracket \leadsto \Xi \ |\  \mathcal{C} \mathbin{|-}{}\<[49]%
\>[49]{}\Varid{t'}{}\<[E]%
\ColumnHook
\end{hscode}\resethooks
  and there is some closing untyped substitution \ensuremath{\sigma \mathbin{:}\Xi } that
  is well-typed for \ensuremath{\Sigma }, i.e, \ensuremath{\vdash \sigma \mathbin{:}\Sigma } and solves the
  untyped constraints,
  i.e., \ensuremath{\Vdash \sigma  \mathcal{C} }, then
  \begin{hscode}\SaveRestoreHook
\column{B}{@{}>{\hspre}l<{\hspost}@{}}%
\column{5}{@{}>{\hspre}l<{\hspost}@{}}%
\column{E}{@{}>{\hspre}l<{\hspost}@{}}%
\>[5]{}\sigma  \Gamma \vdash \sigma  \Varid{t}\equiv \sigma  \Varid{t'}\mathbin{:}\sigma  \Conid{A}.{}\<[E]%
\ColumnHook
\end{hscode}\resethooks
  \end{enumerate}
\end{lemma}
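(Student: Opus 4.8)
The plan is to prove both items simultaneously by structural induction on the term \ensuremath{\Varid{t}}, running in parallel with the proof of the Soundness lemma but carrying a strengthened induction hypothesis. Besides the stated conclusion \ensuremath{\sigma  \Gamma \der \sigma  \Varid{t}\equiv \sigma  \Varid{t'}\mathbin{:}\sigma  \Conid{A}}, I would add to the hypothesis the clause that \ensuremath{\sigma } is in fact a well-typed substitution for the \emph{whole} output signature, \ensuremath{\der \sigma \mathbin{:}\Xi } --- equivalently, that \ensuremath{\sigma } maps every meta-variable elaboration introduces to a term of the \ensuremath{\sigma }-image of its declared type. This is exactly the ingredient that separates this lemma from ordinary Soundness: there the hypothesis \ensuremath{\Sigma' \der \theta \mathbin{:}\Xi } hands it over for free, whereas here we are only given \ensuremath{\der \sigma \mathbin{:}\Sigma } together with \ensuremath{\sigma  \Gamma \der \sigma  \Conid{A}\mathbin{:}\Set } in the inference mode (and the standing assumption \ensuremath{\Sigma ;\Gamma \der \Conid{A}\mathbin{:}\Set } in the checking mode). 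Once that clause is available the conversion is derived exactly as in the Soundness proof; in fact, with \ensuremath{\der \sigma \mathbin{:}\Xi } in hand the conclusion is just the instance of the Soundness lemma at \ensuremath{\Sigma' \mathbin{=}\cdot }. So the substantive part of the argument is to propagate \ensuremath{\der \sigma \mathbin{:}\Xi } through the induction.

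In each inductive case I would first invoke the Well-formedness and functionality lemma to fix the shape of the outputs: \ensuremath{\Xi } consists of \ensuremath{\Sigma }, followed by the type meta-variables \ensuremath{\beta ,\gamma ,\dots } introduced for the domains and codomains, then the signatures generated by the recursive sub-elaborations, and finally a meta-variable \ensuremath{\alpha \mathbin{:}\Gamma \to \Conid{A}} whose application \ensuremath{\alpha \,\Gamma } is the elaborated term \ensuremath{\Varid{t'}}. Then I would establish \ensuremath{\der \sigma \mathbin{:}\Xi } working \emph{from the outside in}. The type meta-variables are pinned down by the constraint elaboration emits at the current node --- e.g. for \ensuremath{\lambda \Varid{x}\to \Varid{s}} the constraint \ensuremath{\Gamma \der (\lambda \Varid{x}\to \Varid{u})\mathbin{:}(\Varid{x}\mathbin{:}\beta \,\Gamma )\to \gamma \,\Gamma \,\Varid{x}\mathrel{=}\alpha \,\Gamma \mathbin{:}\Conid{A}}: since \ensuremath{\sigma } solves it, the two type sides are convertible, and the right one, \ensuremath{\sigma  \Conid{A}}, is well formed by assumption, hence so is \ensuremath{\sigma } of the left one, which forces \ensuremath{\sigma \beta } and \ensuremath{\sigma \gamma } to have their declared types. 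With those established the recursive sub-elaborations run under a signature at which \ensuremath{\sigma } is now known to be well typed and against an expected type (\ensuremath{\gamma }, or \ensuremath{\beta }, or \ensuremath{\Conid{B'}[\Varid{x}\mathbin{:=}\mathsf{true} ]}, etc.) that is well formed under \ensuremath{\sigma }; so the induction hypothesis applies to each of them, its \ensuremath{\der \sigma \mathbin{:}\Xi } clause extends well-typedness of \ensuremath{\sigma } to all the meta-variables contributed by the sub-calls, and its conversion clause supplies \ensuremath{\sigma  \Gamma_i \der \sigma  \Varid{t_i}\equiv \sigma  \Varid{t_i'}\mathbin{:}\sigma  \Conid{A_i}}. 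Lastly \ensuremath{\sigma \alpha } is well typed because \ensuremath{\alpha \,\Gamma } sits on the right-hand side of that same node constraint at type \ensuremath{\Conid{A}}, which \ensuremath{\sigma } solves; and the desired conversion follows by congruence of definitional equality (Figures~\ref{term-conversion} and~\ref{spine-conversion}) from the sub-term conversions, transported along the type-component conversion the same constraint provides. The base cases (\ensuremath{\Set }, \ensuremath{\mathsf{Bool} }, \ensuremath{\mathsf{true} }, \ensuremath{\mathsf{false} }, a variable, a meta-variable) have no recursive calls and collapse to reading well-typedness of the single fresh \ensuremath{\alpha } and the conversion off the single constraint; and checking a neutral term reduces to the inference case (part~1) together with the constraint recording \ensuremath{\sigma  \Conid{A}\equiv \sigma  \Conid{A}_{\mathrm{inf}}}.

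The step I expect to be delicate is precisely the ``outside-in'' bootstrapping: the induction hypothesis may be applied to a sub-elaboration only after we know \ensuremath{\sigma } is well typed at the signature active there and that the expected type is well formed under \ensuremath{\sigma }, yet some of the constraints guaranteeing these facts are emitted \emph{by} that very sub-elaboration. The argument avoids circularity because the type meta-variables a sub-call consumes are always constrained by a node constraint produced strictly above it (at the node that introduced them), whose type component is ultimately anchored to the top-level hypothesis \ensuremath{\sigma  \Gamma \der \sigma  \Conid{A}\mathbin{:}\Set }; spelling this out is an inner induction on the elaboration derivation, unfolded from the root. A secondary subtlety is what ``\ensuremath{\sigma } solves the \emph{untyped} constraints'' is taken to mean: if it is the type-directed notion of solvedness from Section~\ref{unification} with \ensuremath{\sigma } merely not required to be \ensuremath{\Xi }-typed, the reduction to Soundness above applies unchanged; if it is literally type-erased convertibility, one must additionally thread through the congruence steps the standard fact that on well-typed terms \ensuremath{\beta \eta }-convertibility refines to \ensuremath{\equiv }. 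Everything else is routine bookkeeping already carried out in the proof of Soundness.
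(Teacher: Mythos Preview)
Your proposal is correct and follows essentially the same route as the paper: induction on $t$, with the crucial step being to bootstrap well-typedness of $\sigma$ at the freshly introduced meta-variables before invoking the induction hypothesis, and then to conclude by congruence of definitional equality. The paper's (incomplete) proof spells out the abstraction case in exactly this spirit, deriving $\vdash \sigma : (\Sigma, \alpha \of (\Gamma \to \Set))$ before applying the induction hypothesis on the body.

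One small difference worth noting: where you extract well-typedness of $\sigma\beta$, $\sigma\gamma$ from the \emph{node constraint} (using that $\sigma$ solves it and that $\sigma A$ is well formed), the paper instead inverts the hypothesis $\sigma\Gamma \vdash \sigma A : \Set$ \emph{directly}, since in inference mode the output type $A$ already \emph{is} the structured type containing the fresh metas (e.g.\ $A = (x \of \alpha\,\Gamma) \to B$ in the $\lambda$-case). This sidesteps your ``outside-in'' bootstrapping via constraints and the attendant worry about circularity: the information needed to type the fresh metas is handed over by the well-formedness assumption on the inferred type itself, not by a constraint produced lower down. Your route still works, but the paper's is slightly more direct for part~1; for part~2 (checking), your use of the node constraint to transport well-formedness from the given $A$ to the structured type is exactly what is needed. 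Your explicit strengthening of the induction hypothesis by the clause $\vdash \sigma : \Xi$ is a clean way to package what the paper does inline.
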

\begin{proof}
  By induction on term \ensuremath{\Varid{t}}.
  \begin{caselist}

  \nextcase Abstraction $\lambda x \to t$
  \[
  \inference{\den{(\Sigma, \alpha : (\Gamma \to \Set)); (\Gamma, x : (\alpha\,\Gamma)) \der t}
      \leadsto
      \Xi \mid \C \der t' \jinf B
    }{\den{\Sigma; \Gamma \der (\lambda x \to t)}
      \leadsto
      \Xi \mid \C \der (\lambda x \to t') \jinf ((x : \alpha\,\Gamma) \to B)
    }
  \]
  By assumption $\der \sigma : \Sigma$ and $\sigma\Gamma \der
  \sigma((x : \alpha\,\Gamma) \to B) : \Set$, which by inversion gives
  us $\sigma\Gamma \der (\sigma\alpha)\,\Gamma : \Set$ and
  $\sigma(\Gamma, x : \alpha\,\Gamma) \der \sigma B : \Set$.
  Thus $\der \sigma\alpha : (\sigma\Gamma \to \Set)$ and
  $\der \sigma : (\Sigma, \alpha : (\Gamma \to \Set))$.
  By induction hypothesis,
  $\sigma\Gamma, x : \sigma(\alpha\,\Gamma) \der \sigma t = \sigma t' :
  \sigma B$, thus by the $\xi$ rule for equality,
  $\sigma\Gamma \der \sigma(\lambda x \to t) = \sigma(\lambda x \to
  t') : \sigma((x \of \alpha\,\Gamma) \to B)$.

  \nextcase Application $t\,u$
  \[
  \inference{
    \den{\Sigma;\Gamma \der u} \leadsto \Sigma_1\mid\C_1\der u' \jinf A
    \\
    \den{\Sigma_1, \beta : ((\Gamma,x \of A) \to \Set);\Gamma, x \of A
         \der t \jchk ((x \of A) \to \beta\,\Gamma\,x)}
    \leadsto \Sigma_2\mid\C_2\der t'
    }{
    \den{\Sigma;\Gamma \der t\,u} \leadsto
    \Sigma_2 \mid \C_1 \cup \C_2 \der t'\,u' \jinf \beta\, \Gamma\, u'
    }
  \]
  By the first induction hypothesis,
  $\sigma\Gamma \der \sigma u = \sigma u' : \sigma A$.
  By the second induction hypothesis
  \end{caselist}
\end{proof}




\begin{lemma}[Completeness of elaboration]
  Let us assume a term \ensuremath{\Varid{t}} and a wellformed signature \ensuremath{\Sigma } and \ensuremath{\Sigma \vdash \Gamma } and a second signature \ensuremath{\Xi } and a substitution \ensuremath{\Xi \vdash \theta \mathbin{:}\Sigma } such that
  \ensuremath{\Xi ;\theta  \Gamma \vdash \theta  \Varid{t}\mathbin{:}\Conid{C}}.
  \begin{enumerate}

\item Then
  \begin{hscode}\SaveRestoreHook
\column{B}{@{}>{\hspre}l<{\hspost}@{}}%
\column{5}{@{}>{\hspre}l<{\hspost}@{}}%
\column{E}{@{}>{\hspre}l<{\hspost}@{}}%
\>[5]{}\llbracket \Sigma ;\Gamma \vdash \Varid{t}\rrbracket \leadsto (\Sigma ,\Sigma' )\ |\  \mathcal{C} \vdash \Varid{t'}\Rightarrow \Conid{B}{}\<[E]%
\ColumnHook
\end{hscode}\resethooks
  and there exists a substitution \ensuremath{\theta' } such that
  \begin{hscode}\SaveRestoreHook
\column{B}{@{}>{\hspre}l<{\hspost}@{}}%
\column{5}{@{}>{\hspre}l<{\hspost}@{}}%
\column{E}{@{}>{\hspre}l<{\hspost}@{}}%
\>[5]{}\Xi \vdash (\theta ,\theta' )\mathbin{:}(\Sigma ,\Sigma' ){}\<[E]%
\\
\>[5]{}\Xi \Vdash (\theta ,\theta' ) \mathcal{C} {}\<[E]%
\\
\>[5]{}\Xi ;\theta  \Gamma \vdash \Conid{C}\mathrel{=}(\theta ,\theta' ) \Conid{B}\mathbin{:}\mathsf{Set} {}\<[E]%
\\
\>[5]{}\Xi ;\theta  \Gamma \vdash \theta  \Varid{t}\mathrel{=}(\theta ,\theta' ) \Varid{t'}\mathbin{:}\Conid{C}.{}\<[E]%
\ColumnHook
\end{hscode}\resethooks
\item Further, if \ensuremath{\Sigma ;\Gamma \vdash \Conid{A}\mathbin{:}\mathsf{Set} } and
  \ensuremath{\Xi ;\theta  \Gamma \vdash \theta  \Conid{A}\mathrel{=}\Conid{C}\mathbin{:}\mathsf{Set} }, then
  \begin{hscode}\SaveRestoreHook
\column{B}{@{}>{\hspre}l<{\hspost}@{}}%
\column{5}{@{}>{\hspre}l<{\hspost}@{}}%
\column{E}{@{}>{\hspre}l<{\hspost}@{}}%
\>[5]{}\llbracket \Sigma ;\Gamma \vdash \Varid{t}\Leftarrow \Conid{A}\rrbracket \leadsto (\Sigma ,\Sigma' )\ |\  \mathcal{C} \vdash \Varid{t'}{}\<[E]%
\ColumnHook
\end{hscode}\resethooks
  and there exists a substitution \ensuremath{\theta' } such that
  \begin{hscode}\SaveRestoreHook
\column{B}{@{}>{\hspre}l<{\hspost}@{}}%
\column{5}{@{}>{\hspre}l<{\hspost}@{}}%
\column{E}{@{}>{\hspre}l<{\hspost}@{}}%
\>[5]{}\Xi \vdash (\theta ,\theta' )\mathbin{:}(\Sigma ,\Sigma' ){}\<[E]%
\\
\>[5]{}\Xi \Vdash (\theta ,\theta' ) \mathcal{C} {}\<[E]%
\\
\>[5]{}\Xi ;\theta  \Gamma \vdash \theta  \Varid{t}\mathrel{=}(\theta ,\theta' ) \Varid{t'}\mathbin{:}\Conid{C}.{}\<[E]%
\ColumnHook
\end{hscode}\resethooks
  \end{enumerate}
\end{lemma}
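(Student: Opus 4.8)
The plan is to prove both parts simultaneously by induction on the term $t$, following the syntax-directed elaboration rules of Figure~\ref{elaboration}; by the Well-formedness and functionality lemma each syntactic form of $t$ has a unique applicable rule, so the induction splits cleanly into the canonical cases ($\mathsf{Set}$, $\mathsf{Bool}$, $\mathsf{true}$, $\mathsf{false}$, $(x:A)\to B$, $\lambda x\to t$), which feed part~2 (checking mode), and the neutral cases ($x$, $\alpha$, $n\,u$, $\mathbf{if}\ n\ /x.B\ \mathbf{then}\ u\ \mathbf{else}\ v$), which feed part~1 (inference mode). In each case the first step is inversion on the given typing derivation: for a canonical $t$ the derivation of $\Xi;\theta\Gamma\vdash\theta t:C$ ends with the matching introduction rule, which exposes the shape of $C$ together with typing derivations for the immediate subterms (for instance, for $\lambda x\to t$ we learn $C\equiv(x:D)\to E:\mathsf{Set}$ with $\Xi;\theta\Gamma,x:D\vdash\theta t:E$); for a neutral $t$ the derivation ends with the neutral rule, giving an inferred type $A'$ with $\Xi;\theta\Gamma\vdash A'\equiv C:\mathsf{Set}$, which we invert further according to the head to recover the subderivations (taking care of the conversion step hidden in the neutral rule).

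Next I would construct the extension $\theta'$ and invoke the induction hypotheses. The key observation is that each elaboration rule introduces fresh meta-variables of two kinds — auxiliary ones standing for domains, codomains and branch types (the $\beta,\gamma$ of the abstraction and application rules, the elaborated $B'$ of the $\mathbf{if}$ rule), and the single ``result'' meta whose job is to be the returned term — and that the auxiliary ones can be instantiated \emph{immediately}, before any recursive call, by reading off the corresponding components from the inversion of the first step and abstracting them over $\Gamma$ (and the extra binders). With those instantiations fixed, the type against which each subterm is checked in the rule becomes, under the extended substitution, convertible to the type the instantiated subterm genuinely has, which is exactly the hypothesis required to apply part~2 of the induction hypothesis to that subterm (subterms elaborated in inference mode use part~1). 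The recursive calls return further substitution fragments on their own fresh metas, which we accumulate; only at the very end do we instantiate the result meta, setting it to the abstraction over $\Gamma$ of the term the rule constructs from the already-determined elaborated subterms. One then checks, using the inversions together with the type-equality conclusions of the induction hypotheses, that every instantiation is well typed at $(\theta,\theta')$ of its declared type, which establishes $\Xi\vdash(\theta,\theta'):(\Sigma,\Sigma')$.

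It remains to discharge the equational conclusions. That $(\theta,\theta')$ solves $\mathcal{C}$ follows by combining the hypotheses $\Xi\Vdash(\theta,\theta')\mathcal{C}_i$ returned for the subterms with the single new constraint produced by the rule: by the choice of the result meta its two sides become syntactically equal under $(\theta,\theta')$, and its two types are convertible by the accumulated type equalities. The type-equality conclusion $\Xi;\theta\Gamma\vdash C\equiv(\theta,\theta')B:\mathsf{Set}$ of part~1 is precisely what the inversion and the auxiliary-meta instantiations deliver, and in part~2 the analogous equation is provided by the extra hypothesis $\Xi;\theta\Gamma\vdash\theta A\equiv C:\mathsf{Set}$. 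Finally the term equality $\Xi;\theta\Gamma\vdash\theta t\equiv(\theta,\theta')t':C$ is obtained by feeding the term equalities returned by the induction hypotheses into the congruence rules of Figures~\ref{term-conversion} and \ref{spine-conversion} — the $\xi$ rule for $\lambda$, the application and $\mathbf{if}$ congruences for neutral terms, reflexivity for heads — using along the way the standard fact that conversion is stable under (hereditary) substitution, which is needed to rewrite the substituted codomain and branch types.

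I expect the main obstacle to be the bookkeeping around the interleaving of signature and substitution extensions, and specifically justifying that the auxiliary metas are instantiated \emph{before} the recursive calls while the result meta is instantiated \emph{after}: the apparent circularity is harmless, but making it precise requires threading the ``$\Sigma_i$'' and ``$\theta'_i$'' fragments carefully and repeatedly appealing to inversion for the bidirectional system. A related point needing care is the exact scope of part~1: for it to hold one must treat $t$ as a genuine source term, since a bare meta-variable whose $\theta$-instance is an (ambiguously typed) canonical term would break the type-equality conclusion — so the statement should be read as ranging over meta-free terms, or the hypothesis should be strengthened to pin down $C$ as the inferred type.
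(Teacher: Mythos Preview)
Your proposal is correct and takes the same approach as the paper: the paper's proof is literally the single line ``By induction on \ensuremath{\Varid{t}}'', so your detailed case analysis, inversion-then-instantiate strategy for the fresh metas, and use of the congruence rules are a faithful (and considerably more explicit) elaboration of that. The concerns you flag at the end about signature threading and the scope of part~1 for meta-headed terms are genuine bookkeeping points the paper simply does not address.
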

  In other words, if it's possible to instantiate some meta-variables to
  make \ensuremath{\Varid{t}} well-typed, then all the constraints generated by the
  elaboration procedure are solvable.
\begin{proof}
  By induction on \ensuremath{\Varid{t}}.
\end{proof}

\begin{corollary}[Simple inference]
  If
  \ensuremath{\Sigma ;\Gamma \vdash \Varid{t}\mathbin{:}\Conid{A}}
  and
  \ensuremath{\llbracket \Sigma ;\Gamma \vdash \Varid{t}\rrbracket \leadsto (\Sigma ,\Sigma' )\ |\  \mathcal{C} \vdash \Varid{t'}\Rightarrow \Conid{A'}}%
  ,
  then all the constraints in \ensuremath{\mathcal{C} } are solvable by some substitution
  \ensuremath{\Sigma \vdash \theta' \mathbin{:}\Sigma' } and \ensuremath{\Sigma ;\Gamma \vdash \Conid{A}\mathrel{=}\theta'  \Conid{A'}\mathbin{:}\mathsf{Set} }.
\end{corollary}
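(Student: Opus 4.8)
The plan is to obtain the corollary as an immediate consequence of the first part of the Completeness of elaboration lemma, read at the identity meta-substitution. Concretely, I would instantiate that lemma with $\Xi := \Sigma$ and $\theta := \mathrm{id}_\Sigma$, the substitution mapping every meta-variable of $\Sigma$ to itself. The first thing to verify is that $\Sigma \vdash \mathrm{id}_\Sigma : \Sigma$ is a well-formed meta-substitution: by the rule in Figure~\ref{well-formed-subst} this reduces to checking $\Sigma;\cdot \vdash \alpha : A$ for each $(\alpha : A) \in \Sigma$, which follows from the inference rule for meta-variable heads together with reflexivity of conversion. The remaining hypotheses $\vdash \Sigma$ and $\Sigma \vdash \Gamma$ that the lemma requires are available by the standing presupposition that the judgment $\Sigma;\Gamma \vdash t : A$ comes with a valid signature and context.

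Since $\mathrm{id}_\Sigma\Gamma = \Gamma$, $\mathrm{id}_\Sigma t = t$ and $\mathrm{id}_\Sigma A = A$, the assumption $\Sigma;\Gamma \vdash t : A$ is exactly the premise $\Xi;\theta\Gamma \vdash \theta t : C$ of the lemma with $C := A$. Applying Completeness part~(1) then produces an elaboration $\llbracket \Sigma;\Gamma \vdash t\rrbracket \leadsto (\Sigma,\Sigma'') \mid \mathcal{C}'' \vdash t'' \Rightarrow B$ and a substitution $\theta'$ with $\Sigma \vdash (\mathrm{id}_\Sigma,\theta') : (\Sigma,\Sigma'')$, $\Sigma \Vdash (\mathrm{id}_\Sigma,\theta')\mathcal{C}''$, $\Sigma;\Gamma \vdash A = (\mathrm{id}_\Sigma,\theta')B : \mathsf{Set}$, and $\Sigma;\Gamma \vdash t = (\mathrm{id}_\Sigma,\theta')t'' : A$.

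It remains to reconcile this elaboration with the one fixed in the statement. Since elaboration is an algorithm pattern-matching on the term, the functionality part of the Well-formedness and functionality of elaboration lemma gives that its outputs are uniquely determined, so $(\Sigma'',\mathcal{C}'',t'',B) = (\Sigma',\mathcal{C},t',A')$. Finally I would observe that restricting $(\mathrm{id}_\Sigma,\theta')$ to the $\Sigma'$-component yields precisely $\Sigma \vdash \theta' : \Sigma'$, and that applying $(\mathrm{id}_\Sigma,\theta')$ to any expression over $(\Sigma,\Sigma')$ coincides with applying $\theta'$, because the identity part acts trivially on the $\Sigma$-meta-variables, which $\theta'$ also fixes. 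Hence $\Sigma \Vdash \theta'\mathcal{C}$, i.e. $\theta'$ solves all the constraints, and $\Sigma;\Gamma \vdash A = \theta' A' : \mathsf{Set}$, as required.

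The proof is pure bookkeeping and I do not anticipate a genuine obstacle; the only point needing care is the identification of the two elaboration runs and the precise handling of the split substitution $(\mathrm{id}_\Sigma,\theta')$ against $\theta'$ — in particular making explicit that the domain of $\theta'$ is exactly $\Sigma'$ and that $\theta'$ targets the original signature $\Sigma$, so that $\Sigma \vdash \theta' : \Sigma'$ really does witness solvability over $\Sigma$.
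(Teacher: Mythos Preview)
Your proposal is correct and follows exactly the paper's approach: the paper's proof is the single line ``From Completeness with $\Xi = \Sigma$ and identity substitution $\theta$,'' and you have simply unfolded the bookkeeping (well-formedness of the identity substitution, use of functionality to match the two elaboration runs, and the reduction of $(\mathrm{id}_\Sigma,\theta')$ to $\theta'$) that the paper leaves implicit.
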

\begin{proof}
  From Completeness with \ensuremath{\Xi \mathrel{=}\Sigma } and identity substitution \ensuremath{\theta }.
\end{proof}

\begin{corollary}[Simple checking]
  If
  \ensuremath{\Sigma ;\Gamma \vdash \Varid{t}\mathbin{:}\Conid{A}}
  and
  \ensuremath{\llbracket \Sigma ;\Gamma \vdash \Varid{t}\Leftarrow \Conid{A}\rrbracket \leadsto (\Sigma ,\Sigma' )\ |\  \mathcal{C} \vdash \Varid{t'}}%
  ,
  then all the constraints in \ensuremath{\mathcal{C} } are solvable.
\end{corollary}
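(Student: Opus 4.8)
The plan is to obtain this as an immediate consequence of part~(2) of the Completeness of elaboration lemma, exactly paralleling the proof of the Simple inference corollary. First I would instantiate that lemma with $\Xi := \Sigma$ and $\theta$ the identity substitution on $\Sigma$. Since $\vdash \Sigma$, the identity substitution is well-typed, $\Sigma \vdash \mathrm{id} : \Sigma$, and applying it is the identity on contexts, terms and types, so the hypothesis $\Xi;\theta\Gamma \vdash \theta t : C$ of the lemma becomes precisely the assumption $\Sigma;\Gamma \vdash t : A$ (taking $C := A$).

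To invoke part~(2) I additionally need $\Sigma;\Gamma \vdash A : \mathsf{Set}$ and $\Xi;\theta\Gamma \vdash \theta A = C : \mathsf{Set}$. The latter is just $\Sigma;\Gamma \vdash A = A : \mathsf{Set}$, which holds by reflexivity of definitional equality once $A$ is known to be a well-formed type. The former, $\Sigma;\Gamma \vdash A : \mathsf{Set}$, is the validity (regularity) property of the type system: every term that has a type has a well-formed type. This is the one fact not stated verbatim in the excerpt; I would either cite it as a standard structural property of the bidirectional system of Figures~\ref{rules-type-checking}--\ref{rules-type-inference} or establish it by a routine induction on the typing derivation of $t$. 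This is the only real obstacle, and it is mild.

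With these hypotheses discharged, part~(2) of Completeness yields a signature extension $\Sigma'$, constraints $\mathcal{C}$, a term $t'$ with $\llbracket \Sigma;\Gamma \vdash t \Leftarrow A \rrbracket \leadsto (\Sigma,\Sigma') \mid \mathcal{C} \vdash t'$, and a substitution $\theta'$ with $\Sigma \vdash (\mathrm{id},\theta') : (\Sigma,\Sigma')$ and $\Sigma \Vdash (\mathrm{id},\theta')\,\mathcal{C}$. By the uniqueness part of the Well-formedness and functionality lemma, this elaboration output coincides (up to renaming of the fresh meta-variables) with the one fixed in the statement of the corollary, so $(\mathrm{id},\theta')$ — equivalently the extension $\theta'$ over $\Sigma$ — is a well-typed meta-substitution solving every constraint in $\mathcal{C}$. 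Hence $\mathcal{C}$ is solvable, which is exactly the claim. The only subtlety worth an explicit remark in the write-up is this matching of elaboration outputs via functionality; the rest is bookkeeping.
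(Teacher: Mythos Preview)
Your proposal is correct and matches the paper's own proof, which is the one-liner ``From Completeness with $\Xi = \Sigma$ and identity substitution $\theta$.'' You are simply more explicit than the paper about the side conditions (regularity of $A$ and matching outputs via functionality), which the paper silently takes for granted.
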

\begin{proof}
  From Completeness with \ensuremath{\Xi \mathrel{=}\Sigma } and identity substitution \ensuremath{\theta }.
\end{proof}

\subsubsection{Effectiveness}

While the properties above guarantee establish some basic results about
the algorithm, they are all also satisfied by the very useless
elaboration algorithm
\begin{hscode}\SaveRestoreHook
\column{B}{@{}>{\hspre}l<{\hspost}@{}}%
\column{3}{@{}>{\hspre}l<{\hspost}@{}}%
\column{E}{@{}>{\hspre}l<{\hspost}@{}}%
\>[3]{}\llbracket \Sigma ;\Gamma \vdash \Varid{t}\mathbin{:}\Conid{A}\rrbracket \leadsto (\Sigma ,\alpha \mathbin{:}\Gamma \to \Conid{A},\beta \mathbin{:}\Gamma \to \mathsf{Set} ),\alpha \;\Gamma ,\{\mskip1.5mu \Gamma \vdash \alpha \;\Gamma \mathbin{:}\Conid{A}\mathrel{=}\Varid{t}\mathbin{:}\beta \;\Gamma \mskip1.5mu\}.{}\<[E]%
\ColumnHook
\end{hscode}\resethooks
However, the intent of the developed algorithm is to be as effective,
when used with pattern unification, as the current techniques to
type check dependent types with meta-variables.

To achieve this, it has been designed so that the generated constraints
fall in the pattern fragment when existing type checkers would be able
to make progress, and by testing the algorithm ``in the wild'' with
existing Agda programs we have found it effective in practice.  For
instance, if we did not keep our terms in \ensuremath{\beta }-normal form, we could
not elaborate applications into constraints falling into the pattern
fragment, due to the fact that we cannot reliably infer the type of the
function.

However, future work should involve a formal characterization of
completeness in the context of type checkers for dependent types with
meta-variables, and a proof that our algorithm is indeed complete.

\section{Additional remarks}
\label{remarks}

\subsection{Additional features}
\label{more-features}

In this section we will briefly sketch how to fit popular features into
the framework described.  The general idea is to do everything which
doesn't require normalization in the elaboration procedure, and the rest
into the unifier.

\subsubsection{Implicit arguments}

We have already implemented a restricted form of implicit arguments,
which we call ``type schemes'' in line with ML terminology.  Type
schemes allow the user to define a number of implicit arguments in
top-level definitions, for example the already mentioned
\begin{hscode}\SaveRestoreHook
\column{B}{@{}>{\hspre}l<{\hspost}@{}}%
\column{3}{@{}>{\hspre}l<{\hspost}@{}}%
\column{E}{@{}>{\hspre}l<{\hspost}@{}}%
\>[3]{}\Varid{head}\mathbin{:}\{\mskip1.5mu \Conid{A}\mathbin{:}\mathsf{Set} \mskip1.5mu\}\to \{\mskip1.5mu \Varid{n}\mathbin{:}\mathsf{Nat} \mskip1.5mu\}\to \mathsf{Vec} \;\Conid{A}\;(\mathrm{1}\mathbin{+}\Varid{n})\to \Conid{A}.{}\<[E]%
\ColumnHook
\end{hscode}\resethooks
Under this scheme, every occurrence of \ensuremath{\Varid{head}} is statically replaced
with \ensuremath{\Varid{head}\;\anonymous \;\anonymous }, before even elaborating---we do it while scope
checking. However this is obviously limited, since we can only have
implicit arguments before any explicit ones and only for top-level
definitions.

We want to enable a more liberal placement of implicit arguments.  This
is achieved in Agda by allowing implicit arguments in all types, and in
any many positions.  The details of the implementation are still in flux
in Agda itself, but the core idea is to type-check function application
and abstractions bidirectionally, by first looking at the type and
inserting implicit arguments if needed.  So elaborating \ensuremath{\Varid{t}\;\Varid{u}} where \ensuremath{\Varid{t}\mathbin{:}\{\mskip1.5mu \Varid{x}\mathbin{:}\Conid{A}\mskip1.5mu\}\to (\Varid{y}\mathbin{:}\Conid{B})\to \Conid{C}} will result in \ensuremath{\Varid{t}\;\anonymous \;\Varid{u}}.  Similarly, elaborating
\ensuremath{\Varid{t}\mathbin{:}\{\mskip1.5mu \Varid{x}\mathbin{:}\Conid{A}\mskip1.5mu\}\to \Conid{B}} will result in \ensuremath{\lambda \{\mskip1.5mu \Varid{x}\mskip1.5mu\}\to \Varid{t}}, where \ensuremath{\{\mskip1.5mu \Varid{x}\mskip1.5mu\}} binds an
implicit arguments.

We are exploring two ways to integrate this kind of mechanism in our
framework:
\begin{itemize}
\item Mimic the bidirectional type-checking performed in Agda and
  similar systems closely, by adding a new kind of constraint for
  function application and abstraction which waits on the type to have a
  \emph{rigid} head, that is to say a type not blocked on a
  meta-variable.
\item Alternatively, force all implicit arguments to appear before an
  explicit one (with the exception of type schemes), and always include
  an implicit arguments in \ensuremath{\to }-types.\footnote{Note that the
    restriction of having implicit arguments always paired with explicit
    ones is being planned in Agda too, to make the automatic insertion
    of implicit arguments more predictable.}  Multiple implicit
  arguments will then be handled by one implicit argument carrying a
  tuple.
\end{itemize}

We are currently implementing the latter proposal, since it is simpler
to describe and to implement, although only using it will tell if it is
practical.\footnote{A more thorough proposal is available at
  \url{https://github.com/bitonic/tog/wiki/Implicit-Arguments}.}

\subsubsection{Type classes}

Type classes, as employed by Haskell, were introduced to handle
overloaded operators in programming languages \cite{wadler1989}.  Other
similar structures include canonical structures in Coq.  In short, type
classes let the user specify a collection of methods that can be
implemented for a specific type.  For example, the type class
identifying monoid is defined as
\begin{hscode}\SaveRestoreHook
\column{B}{@{}>{\hspre}l<{\hspost}@{}}%
\column{3}{@{}>{\hspre}l<{\hspost}@{}}%
\column{5}{@{}>{\hspre}l<{\hspost}@{}}%
\column{14}{@{}>{\hspre}l<{\hspost}@{}}%
\column{22}{@{}>{\hspre}l<{\hspost}@{}}%
\column{E}{@{}>{\hspre}l<{\hspost}@{}}%
\>[3]{}\mathbf{class}\;\Conid{Monoid}\;\Varid{a}\;\mathbf{where}{}\<[E]%
\\
\>[3]{}\hsindent{2}{}\<[5]%
\>[5]{}\Varid{mempty}\mathbin{:}\Varid{a}{}\<[E]%
\\
\>[3]{}\hsindent{2}{}\<[5]%
\>[5]{}\Varid{mappend}\mathbin{:}\Varid{a}\to \Varid{a}\to \Varid{a}{}\<[E]%
\\[\blanklineskip]%
\>[3]{}\mbox{\onelinecomment  Monoid instance for list}{}\<[E]%
\\
\>[3]{}\mathbf{instance}\;\Conid{Monoid}\;[\mskip1.5mu \Varid{a}\mskip1.5mu]\;\mathbf{where}{}\<[E]%
\\
\>[3]{}\hsindent{2}{}\<[5]%
\>[5]{}\Varid{mempty}{}\<[14]%
\>[14]{}\mathrel{=}[\mskip1.5mu \mskip1.5mu]{}\<[22]%
\>[22]{}\mbox{\onelinecomment  Empty list}{}\<[E]%
\\
\>[3]{}\hsindent{2}{}\<[5]%
\>[5]{}\Varid{mappend}{}\<[14]%
\>[14]{}\mathrel{=}(\plus ){}\<[22]%
\>[22]{}\mbox{\onelinecomment  Concatenation}{}\<[E]%
\\[\blanklineskip]%
\>[3]{}\mbox{\onelinecomment  Monoid instance for pairs of monoids}{}\<[E]%
\\
\>[3]{}\mathbf{instance}\;(\Conid{Monoid}\;\Varid{a},\Conid{Monoid}\;\Varid{b})\Rightarrow \Conid{Monoid}\;(\Varid{a},\Varid{b})\;\mathbf{where}{}\<[E]%
\\
\>[3]{}\hsindent{2}{}\<[5]%
\>[5]{}\Varid{mempty}\mathrel{=}(\Varid{mempty},\Varid{mempty}){}\<[E]%
\\
\>[3]{}\hsindent{2}{}\<[5]%
\>[5]{}\Varid{mappend}\;(x_1 ,y_1 )\;(x_2 ,y_2 )\mathrel{=}(x_1 \mathbin{`\Varid{mappend}`}x_2 ,y_1 \mathbin{`\Varid{mappend}`}y_2 ){}\<[E]%
\ColumnHook
\end{hscode}\resethooks

Type classes are a form of type-directed name resolution, and thus
cannot be resolved at elaboration time---we might need to instantiate
meta-variables before being able to resolve the right method.  To
integrate it in our framework we have to include type-classes into our
unification procedure.  Luckily, this is exactly what the authors of the
theorem prover Matita accomplished using what they dubbed
\emph{unification hints} \cite{asperti2009}.  Briefly, the unifier is
given ``hints'' on how to solve problems that it cannot resolve by
itself, and such hints are repeatedly tried if unification fails.

Similar to type-classes, overloaded constructors are a feature
introduced by Agda, that lets the user define multiple data constructors
with the same name.  When such an overloaded constructor is used, its
type must be determined by the type we are type checking it against.  It
is easy to see how this problem is essentially the same as resolving the
right type class instance when encountering one of its methods---the
type-class being ``data types with the same constructors'', the
instances being the data-types, and the methods being the
constructors---, and thus we plan to implement this feature using
unification hints as well.

\subsection{Performance}

One reason for concern is that our algorithm generates more constraints
than ordinary type-checking algorithms.  However, as already remarked,
our algorithm generates a number of constraints and meta-variables which
is linear in the size of the input term.  Moreover, we would expect the
unifier to spend little time solving the trivial constraints which are
normally handled by the type-checker directly.

In the examples we have collected, we have found that this is the case,
and the run time is dominated by unification filling in implicit
arguments which can become very large.  More specifically, most of the
time is spent dealing with \ensuremath{\eta }-expansion, which we plan to improve in
the future.

\section{Conclusion}

We have presented an algorithm that leverages the unifier, an important
part of already existing dependently typed programming languages and
theorem provers, to greatly simplify the process of type checking.  The
expressivity of higher-order unification lets us specify the type-rules
concisely.  Moreover, we have clearly separated type checking from
unification, allowing for greater modularity.

We have implemented the ideas presented in the \texttt{tog}, covering a
large subset of the Agda languages.  We are currently in the process of
improving \texttt{tog} to get narrow the gap between its capabilities
and Agda.

\subsection{Acknowledgements and related work}
\label{related-work}

This work is a continuation of the work by Norell \& Coquand
\cite{norell2007b}, which describes how Agda deals with issues that we
presented.  In fact, the algorithm described here is a simpler
re-implementation of what they specified.

The other main inspiration came from a discussion with Adam Gundry about
how Epigram 2 deals with type checking in the presence of
meta-variables.  The propositional equality Epigram 2 is powerful enough
to represent constraints stuck on some meta-variable as an unfinished
equality proof.  Thus, the unifier, when given a constraint \ensuremath{\Gamma \vdash \Varid{t}\mathbin{:}\Conid{A}\mathrel{=}\Varid{u}\mathbin{:}\Conid{B}} produces a---possibly unfinished---proof that \ensuremath{\Gamma \vdash \Varid{t}\mathbin{:}\Conid{A}\equiv \Varid{u}\mathbin{:}\Conid{B}}, where here \ensuremath{\equiv } denotes Epigram's heterogeneous
propositional equality.  This proof can be used to ``transport'' terms
on one side of the equation to the other.  From this discussion I
realised that such a powerful equality was not needed to implement a
similar system. \mytodo{Insert blog citation}

The already cited work on Matita \cite{asperti2009} convinced us further
that isolating ``dynamic'' procedures in the unifier is a good idea.

Finally, our work follows a long tradition of separating elaboration of
the user syntax into a core type theory, and the type-checking of such
theory.  As far as we know this line of work goes back at least to
Coquand and Huet's ``Constructive Engine'' \cite{huet1989}, with its
separation between the ``concrete'' user syntax and the internal
``abstract'' syntax.

Moreover, we'd like to thank Andrea Vezzosi,
Nils Anders Danielsson, Conor McBride, and Bas Spitters for the useful input.

\appendix

\section{Syntax and types used in examples}
\label{examples-syntax}

Throughout the examples we use a syntax close to the one in Agda and
Haskell.  Some details regarding the syntax:
\begin{itemize}
\item type and data constructors are shown in a $\mathsf{sans}$ $\mathsf{serif}$
  font;
\item new meta-variables are introduced using underscores (\ensuremath{\anonymous });
\item binary operators are referred to with parentheses, e.g. \ensuremath{(\mathbin{+})};
\item implicit arguments are indicated by curly braces;
\item definitions can be defined by pattern matching.
\end{itemize}
We use some standard types, namely:
\begin{itemize}
\item \ensuremath{\mathsf{Set} \mathbin{:}\mathsf{Set} }---\ensuremath{\mathsf{Set} } is the type of all types, and its type is
  \ensuremath{\mathsf{Set} } itself;
\item \ensuremath{\mathsf{Bool} \mathbin{:}\mathsf{Set} }, inhabited by \ensuremath{\mathsf{true} } and \ensuremath{\mathsf{false} };
\item \ensuremath{\mathsf{Nat} \mathbin{:}\mathsf{Set} }, representing the natural numbers, introduced by
  number literals (\ensuremath{\mathrm{1}}, \ensuremath{\mathrm{2}}, etc.);
\item \ensuremath{(\equiv )\mathbin{:}\{\mskip1.5mu \Conid{A}\mathbin{:}\mathsf{Set} \mskip1.5mu\}\to \Conid{A}\to \Conid{A}\to \mathsf{Set} }, the identity type; inhabited
  by \ensuremath{\mathsf{refl} \mathbin{:}\{\mskip1.5mu \Conid{A}\mathbin{:}\mathsf{Set} \mskip1.5mu\}\to \{\mskip1.5mu \Varid{x}\mathbin{:}\Conid{A}\mskip1.5mu\}\to \Varid{x}\equiv \Varid{x}}.
\end{itemize}

\section{Substitution and term elimination}
\label{substitution}

Figure~\ref{fig:substitution} show the rules to substitute a term \ensuremath{\Varid{u}}
for a head \ensuremath{\Varid{h}}, which can be a variable \ensuremath{\Varid{x}} or a meta-variable \ensuremath{\alpha }.
Substituting \ensuremath{(\Varid{n}\;\Varid{u})[\Varid{h}\mathbin{:=}\Varid{t}]} into proper neutral terms \ensuremath{\Varid{n}\;\Varid{t}} might (in
case the head of \ensuremath{\Varid{n}} is \ensuremath{\Varid{h}}) generate redexes, which are eliminated by
invocations of the term elimination judgment \ensuremath{\Varid{t}\;\Varid{u}\leadsto \Varid{v}} and \ensuremath{\mathbf{if}\;\Varid{t}\;/\Varid{x}.\Conid{A}\;\mathbf{then}\;\Varid{u}\;\mathbf{else}\;\Varid{v}\leadsto \Varid{t}}, whose rules are given respectively in
Figure~\ref{application-elimination} and Figure~\ref{bool-elimination}.  Term
elimination in turn invokes substitution when the eliminated term is a
$\lambda$-abstraction.

While we use explicit names in our rules, we do not address issues
related to \ensuremath{\alpha }-renaming here.  In our prototype substitution is
implemented using de Bruijn indices \cite{debruijn1972}.

\begin{figure}
    \[
    \inference{}{\ensuremath{\mathsf{Set} [\Varid{h}\mathbin{:=}\Varid{t}]\leadsto \mathsf{Set} }}\hfill
    \inference{}{\ensuremath{\mathsf{Bool} [\Varid{h}\mathbin{:=}\Varid{t}]\leadsto \mathsf{Bool} }}\hfill
    \inference{}{\ensuremath{\mathsf{true} [\Varid{h}\mathbin{:=}\Varid{t}]\leadsto \mathsf{true} }}\hfill
    \inference{}{\ensuremath{\mathsf{false} [\Varid{h}\mathbin{:=}\Varid{t}]\leadsto \mathsf{false} }}
    \]
    \[
    \inference{
      \ensuremath{\Conid{A}[\Varid{h}\mathbin{:=}\Varid{t}]\leadsto \Conid{A'}} & \ensuremath{\Conid{B}[\Varid{h}\mathbin{:=}\Varid{t}]\leadsto \Conid{B'}}
    }{
      \ensuremath{((\Varid{x}\mathbin{:}\Conid{A})\to \Conid{B})[\Varid{h}\mathbin{:=}\Varid{t}]\leadsto (\Varid{x}\mathbin{:}\Conid{A'})\to \Conid{B'}}
    }\hfill
    \inference{
      \ensuremath{\Varid{u}[\Varid{h}\mathbin{:=}\Varid{t}]\leadsto \Varid{u'}}
    }{
      \ensuremath{(\lambda \Varid{x}\to \Varid{u})[\Varid{h}\mathbin{:=}\Varid{t}]\leadsto \lambda \Varid{x}\to \Varid{u'}}
    }
    \]
    \[
    \inference{}{\ensuremath{\Varid{h}[\Varid{h}\mathbin{:=}\Varid{t}]\leadsto \Varid{t}}}\hfill
    \inference{\ensuremath{\Varid{h}\not\equiv \Varid{h'}}}{\ensuremath{\Varid{h'}[\Varid{h}\mathbin{:=}\Varid{t}]\leadsto \Varid{h'}}}\hfill
    \inference{
      \ensuremath{\Varid{n}[\Varid{h}\mathbin{:=}\Varid{t}]\leadsto \Varid{v}} & \ensuremath{\Varid{u}[\Varid{h}\mathbin{:=}\Varid{t}]\leadsto \Varid{u'}} &
      \ensuremath{\Varid{v}\;\Varid{u'}\leadsto \Varid{v'}}
    }{
      \ensuremath{(\Varid{n}\;\Varid{u})[\Varid{h}\mathbin{:=}\Varid{t}]\leadsto \Varid{v'}}
    }
    \]
    \[
    \inference{
      \ensuremath{\Varid{n}[\Varid{h}\mathbin{:=}\Varid{t}]\leadsto \Varid{w}} & \ensuremath{\Conid{A}[\Varid{h}\mathbin{:=}\Varid{t}]\leadsto \Conid{A'}} & \ensuremath{\Varid{u}[\Varid{h}\mathbin{:=}\Varid{t}]\leadsto \Varid{u'}} & \ensuremath{\Varid{v}[\Varid{h}\mathbin{:=}\Varid{t}]\leadsto \Varid{v'}} \\
      \ensuremath{\mathbf{if}\;\Varid{w}\;/\Varid{x}.\Conid{A'}\;\mathbf{then}\;\Varid{u'}\;\mathbf{else}\;\Varid{v'}\leadsto \Varid{w'}}
    }{
      \ensuremath{(\mathbf{if}\;\Varid{n}\;/\Varid{x}.\Conid{A}\;\mathbf{then}\;\Varid{u}\;\mathbf{else}\;\Varid{v})[\Varid{h}\mathbin{:=}\Varid{t}]\leadsto \Varid{w'}}
    }
    \]
    \caption{\boxed{\ensuremath{\Varid{u}[\Varid{h}\mathbin{:=}\Varid{t}]\leadsto \Varid{u'}}} and \boxed{\ensuremath{\Varid{n}[\Varid{h}\mathbin{:=}\Varid{t}]\leadsto \Varid{u}}}
      Hereditary substitution into canonical and neutral terms}
    \label{fig:substitution}
\end{figure}

\begin{figure}
  \[
   \hfill
   \inference{\ensuremath{\Varid{t}[\Varid{x}\mathbin{:=}\Varid{u}]\leadsto \Varid{t'}}}{
     \ensuremath{(\lambda \Varid{x}\to \Varid{t})\;\Varid{u}\leadsto \Varid{t'}}
  }
   \hfill
  \]
  \caption{\boxed{\ensuremath{\Varid{t}\;\Varid{u}\leadsto \Varid{v}}} Application elimination}
  \label{application-elimination}
\end{figure}

\begin{figure}
  \[
  \hfill
  \inference{}{
    \ensuremath{(\mathbf{if}\;\mathsf{true} \;/\Varid{x}.\Conid{A}\;\mathbf{then}\;\Varid{t}\;\mathbf{else}\;\Varid{u})\leadsto \Varid{t}}
  }\hfill
  \inference{}{
    \ensuremath{(\mathbf{if}\;\mathsf{false} \;/\Varid{x}.\Conid{A}\;\mathbf{then}\;\Varid{t}\;\mathbf{else}\;\Varid{u})\leadsto \Varid{u}}
  }
  \hfill
  \]
  \caption{\boxed{\ensuremath{\mathbf{if}\;\Varid{t}\;/\Varid{x}.\Conid{A}\;\mathbf{then}\;\Varid{u}\;\mathbf{else}\;\Varid{v}\leadsto \Varid{t}}} \ensuremath{\mathsf{Bool} } elimination}
  \label{bool-elimination}
\end{figure}

\bibliographystyle{plain}
\bibliography{type-checking-metas}

\begin{thebibliography}{10}

\bibitem{asperti2009}
Andrea Asperti, Wilmer Ricciotti, Claudio~Sacerdoti Coen, and Enrico Tassi.
\newblock Hints in unification.
\newblock In {\em Theorem Proving in Higher Order Logics}, pages 84--98.
  Springer, 2009.

\bibitem{coqart}
Yves Bertot and Pierre Cast{\'e}ran.
\newblock {\em {Interactive Theorem Proving and Program Development,
  Coq'Art:the Calculus of Inductive Constructions}}.
\newblock Springer-Verlag, 2004.

\bibitem{brady2013}
Edwin Brady.
\newblock Idris, a general-purpose dependently typed programming language:
  Design and implementation.
\newblock {\em Journal of Functional Programming}, 23(05):552--593, 2013.

\bibitem{debruijn1972}
Nicolaas~Govert De~Bruijn.
\newblock Lambda calculus notation with nameless dummies, a tool for automatic
  formula manipulation, with application to the church-rosser theorem.
\newblock In {\em Indagationes Mathematicae (Proceedings)}, volume~75, pages
  381--392. Elsevier, 1972.

\bibitem{gundry2013}
Adam Gundry.
\newblock {\em Type Inference, Haskell and Dependent Types}.
\newblock PhD thesis, Department of Computer and Information Sciences,
  University of Strathclyde, 2013.

\bibitem{huet1989}
G{\'e}rard Huet.
\newblock The constructive engine.
\newblock {\em A Perspective in Theoreticak Computer Science}, pages 38--69,
  1989.

\bibitem{mcbride2004}
Conor McBride and James McKinna.
\newblock The view from the left.
\newblock {\em Journal of Functional Programming}, 14(1):69--111, 2004.

\bibitem{miller1992}
Dale Miller.
\newblock Unification under a mixed prefix.
\newblock {\em J. Symb. Comput.}, 14(4):321--358, October 1992.

\bibitem{norell2007}
Ulf Norell.
\newblock {\em Towards a practical programming language based on dependent type
  theory}.
\newblock PhD thesis, Department of Computer Science and Engineering, Chalmers
  University of Technology, SE-412 96 G\"{o}teborg, Sweden, 2007.

\bibitem{norell2007b}
Ulf Norell and Catarina Coquand.
\newblock Type checking in the presence of meta-variables.
\newblock {\em Unpublished draft}, 2007.

\bibitem{pollack90}
Robert Pollack.
\newblock Implicit syntax.
\newblock Informal Proceedings of First Workshop on Logical Frameworks,
  Antibes, May 1990.

\bibitem{wadler1989}
Philip Wadler and Stephen Blott.
\newblock How to make ad-hoc polymorphism less ad hoc.
\newblock In {\em Proceedings of the 16th ACM SIGPLAN-SIGACT Symposium on
  Principles of Programming Languages}, POPL '89, pages 60--76, New York, NY,
  USA, 1989. ACM.

\bibitem{watkins2003}
Kevin Watkins, Iliano Cervesato, Frank Pfenning, and David Walker.
\newblock A concurrent logical framework i: Judgments and properties.
\newblock Technical report, DTIC Document, 2003.

\end{thebibliography}

\end{document}